\theoremstyle{thmstyleone}%
\newtheorem{theorem}{Theorem}[section]
\newtheorem{lemma}[theorem]{Lemma}%
\theoremstyle{thmstyletwo}%
\newtheorem{remark}{Remark}%
\theoremstyle{thmstylethree}%
\newtheorem{definition}{Definition}[section]%
\begin{document}

\title[Compressed Newton Thresholding]{Compressed Newton-direction-based Thresholding Methods for Sparse Optimization Problems\footnote{This work is supported by the Natural Science Foundation of China under the grants 12301400, 12471295 and 12426306, Guangdong Basic and Applied Basic Research Foundation
(2024A1515011566), and Hetao Shenzhen-Hong Kong Science and Technology Innovation Cooperation
Zone Project (HZQSWS-KCCYB-2024016).}
}


\author[1]{\fnm{Nan} \sur{Meng}}\email{Nan.Meng@nottingham.edu.cn}

\author*[2]{\fnm{Yun-Bin} \sur{Zhao}}\email{Yunbinzhao@cuhk.edu.cn}

\affil[1]{\orgdiv{School of Mathematical Sciences}, \orgname{University of Nottingham Ningbo China}, \orgaddress{\city{Ningbo}, \state{Zhejiang Province} \postcode{315100}, \country{China}}}

\affil[2]{\orgdiv{Shenzhen International Center for Industrial and Applied Mathematics, SRIBD}, \orgname{The Chinese University of Hong Kong}, \orgaddress{\city{Shenzhen}, \state{Guangdong Province} \postcode{518100}, \country{China}}}

\abstract{
Thresholding algorithms for sparse optimization problems involve two key components: search directions and thresholding strategies.
In this paper, we use the compressed Newton direction as a search direction, derived by confining the classical Newton step to a low-dimensional subspace and embedding it back into the full space with diagonal regularization.
This approach significantly reduces the computational cost for finding the search direction while maintaining the efficiency of Newton-like methods.
Based on this new search direction, we propose two major classes of algorithms by adopting hard or optimal thresholding: the compressed Newton-direction-based thresholding pursuit (CNHTP) and compressed Newton-direction-based optimal thresholding pursuit (CNOTP).
We establish the global convergence of the proposed algorithms under the restricted isometry property.
Experimental results demonstrate that the proposed algorithms perform comparably to several state-of-the-art methods in terms of success frequency and solution accuracy for solving the sparse optimization problem.
}

\keywords{Sparse optimization problem, Compressed Newton direction, Thresholding methods, Convergence, Signal and image recovery.}

\maketitle

\section{Introduction}
The sparse optimization problem is essential to many practical applications, such as compressed sensing \cite{foucart2013mathematical}, signal processing \cite{elad2010sparse}, machine learning \cite{sra2011optimization}, feature selection \cite{gui2016feature}, and model overfitting prevention \cite{tibshirani1996regression}.
A widely adopted sparse optimization model is the cardinality-constrained least-squares problem:
\begin{equation} \label{prob1}
\min_{x \in \mathbb{R}^n} \left\{\|y-A x\|_2^2:\|x\|_0 \leq k \right\},
\end{equation}
where $A \in \mathbb{R}^{m \times n}$ is a given matrix $(m \ll n)$, $y$ is a given vector, the $\ell_0$-norm $\|\cdot\|_0$ counts the number of nonzero entries of a vector, and $k$ is a given integer.
$x \in \mathbb{R}^n$ satisfying the constraint in \eqref{prob1} is called a $k$-sparse vector.
A number of variants and relaxations of \eqref{prob1} have been extensively studied; see, for instance, \cite{eldar2012compressed}, \cite{foucart2013mathematical} and \cite{zhao2018sparse}.

There are several classes of numerical methods for solving sparse optimization problems, including the convex optimization methods such as LASSO \cite{tibshirani1996regression} and basis pursuit \cite{chen2001atomic, donoho2001uncertainty, donoho2006stability}, greedy pursuit methods such as  \cite{pati1993orthogonal, tropp2004greed, dai2009subspace, donoho2012sparse, wang2015recovery, zhao2023dynamic}, thresholding methods such as \cite{tibshirani1996regression, beck2009fast, blumensath2008iterative, blumensath2009iterative, foucart2011hard, blumensath2010normalized, tanner2013normalized, zhao2020optimal}, and Bayesian sparse-learning methods \cite{ji2008bayesian}.
Among these methods, the thresholding method is one of the most widely used approaches due to its simple iterative scheme.
It can be viewed as a combination of the traditional descent optimization methods and a certain thresholding technique used to generate sparse iterates.
As described in \cite{meng2022partial}, a general iterative scheme for thresholding methods takes the form
\begin{equation} \label{model1}
x^{(p+1)}=\mathcal{T}_k (x^{(p)}+\lambda \mathbf{d}),
\end{equation}
where $\mathcal{T}_k(\cdot)$ represents a thresholding operator that returns a $k$-sparse vector, $\lambda \in \mathbb{R}$ denotes the stepsize, and $\mathbf{d} \in \mathbb{R}^{n}$ is a search direction at the current iterate $x^{(p)} \in \mathbb{R}^{n}$.

The steepest descent direction and the Newton direction are two popular search directions \cite{nocedal2006numerical}.
The method based on the Newton direction may achieve a fast local convergence and thus can generally solve the problem faster than the first-order methods.
However, a standard Newton step is inapplicable to underdetermined problems such as \eqref{prob1}, in which the Hessian of the objective function is singular.
To address this issue, various Newton-type modifications have been proposed \cite{dembo1982truncated, martens2010deep}, and have also been generalized to solve sparse optimization problems \cite{yuan2014newton, yuan2017newton, chen2017fast, meng2020newton, zhou2021newton, zhou2021global}.

As indicated in \eqref{model1}, the thresholding operator $\mathcal{T}_k$ plays an essential role in generating the sparse iterate.
Different choices of $\mathcal{T}_k$ lead to different iterative methods.
Many existing thresholding methods fall within the framework of the iterative scheme \eqref{model1}, including soft thresholding \cite{donoho1995denoising, tibshirani1996regression, beck2009fast}, hard thresholding \cite{blumensath2008iterative, foucart2011hard, blumensath2010normalized, tanner2013normalized} and optimal thresholding \cite{zhao2020optimal, zhao2021analysis}.
In this paper, we focus on the methods using hard thresholding and/or optimal thresholding operators.
Hard thresholding operator retains the $k$ largest entries in magnitude of a vector and sets the remaining entries to zero.
Iterative methods using hard thresholding and steepest descent direction were studied in such references as \cite{blumensath2008iterative, blumensath2010normalized, foucart2011hard, tanner2013normalized}, and Newton-type directions in \cite{jing2014quasi, wang2018new, meng2020newton}.
Optimal thresholding (OT), distinct from hard thresholding, selects the best $k$ terms of a vector that minimize the objective function among all possible $k$ terms of the vector.
It has been combined with the steepest descent direction \cite{zhao2020optimal, zhao2021analysis}, Newton-type directions \cite{meng2022newton}, and certain memorable search directions \cite{sun2023heavy} to develop specific approaches for sparse optimization problems.

At present, the search direction is typically either the full gradient of the objective function or its Newton counterpart.
In this paper, we study the following iterative scheme:
\begin{equation} \label{model3}
x^{(p+1)}=\mathcal{T}_k (x^{(p)}+\lambda \mathbf{d}_{\text{compress}}),
\end{equation}
where the search direction $\mathbf{d}_{\text{compress}}$ itself is compressible.
Indeed, the sparse search direction can be viewed as a special case within this framework.
The framework (\ref{model3}) extends the approach in \cite{meng2022partial} where the search direction was taken as the partial steepest descent direction and $\mathcal{T}_k$ was taken as OT.
We introduce the so-called compressed Newton direction which is a novel compressed search direction proposed here for the first time.

In this paper, we first prove that the compressed Newton direction is a genuine descent direction for the objective function.
Then we incorporate it into the iterative framework \eqref{model3} to obtain two types of algorithms by using hard and optimal thresholding operators.
The first type is referred to as the compressed Newton-direction-based hard thresholding (CNHT) method, and its enhanced counterpart is called the compressed Newton-direction-based hard thresholding pursuit (CNHTP).
The second type is called the compressed Newton-direction-based optimal thresholding (CNOT) together with its enhanced counterpart called the compressed Newton-direction-based optimal thresholding pursuit (CNOTP).
Under the restricted isometry property (RIP), we establish the global convergence results for these algorithms.
The main theoretical results are summarized in Theorems \ref{thm-cnt}, \ref{thm-cntp}, \ref{thm-cnot}, and \ref{thm-cnotp}, respectively.
Moreover, numerical experiments are conducted on random sparse optimization problems to demonstrate the performance of our algorithms and to compare with several existing methods.

This work is organized as follows.
Section \ref{intro} introduces the main algorithmic framework. 
Section \ref{theo} conducts the theoretical analysis for the proposed algorithms.
Section \ref{numerical} demonstrates the numerical results of the proposed methods.

\textbf{Notation:} The $\ell_0$-norm, $\| \cdot \|_0$, counts the number of nonzero entries of a vector.
The $\ell_2$-norm $\|x\|_2$ is defined as $\textstyle \sqrt{\sum_{i=1}^n x_i^2}$, where $x \in \mathbb{R}^n$.
Let $\Omega \subseteq[N] := \{1, \ldots, n\}$ denote an index set with cardinality $|\Omega| \leq t$, where $t$ is a certain integer number.
The complement of $\Omega$ is denoted by $\overline{\Omega} := [N] \backslash \Omega$.
For a matrix $A \in \mathbb{R}^{m \times n}$, $A_{\Omega} \in \mathbb{R}^{m \times t}$ denotes the submatrix formed by selecting the columns of $A$ indexed by $\Omega$, and $A_{\Omega |} \in \mathbb{R}^{m \times n}$ is the matrix obtained from $A$ by zeroing out all columns of $A$ outside $\Omega$.
For a square matrix $B \in \mathbb{R}^{n \times n}, B_{\Omega \times \Omega} \in \mathbb{R}^{t \times t}$ denotes the principal submatrix of $B$ formed by the rows and columns indexed by $\Omega$.
Let $I$ denote the identity matrix and $\mathbf{e}$ the vector of ones, whose dimensions are clear from the content.
The operator $\mathcal{T}_k(\cdot)$ produces a $k$-sparse vector.
The hard thresholding operator $\mathcal{H}_k(\cdot)$ retains the $k$ largest-magnitude entries of a vector and sets the rest to zero.
Supp$(x) = \{i : x_i \neq 0\}$ is called the support of $x$, and the support of $\mathcal{H}_k(\cdot)$ is denoted by $\mathcal{L}_k(\cdot)$.
The optimal $k$-thresholding operator $\mathcal{Z}_k^{\#}(\cdot)$ selects the $k$ entries of a vector that minimize a given objective function.
The Hadamard product $u \otimes v$ denotes the elementwise product of two vectors $u$ and $v$.
$\lceil \cdot \rceil$ denotes the ceiling function, which maps a real number to the smallest integer greater than or equal to it.

\section{Algorithms} \label{intro}
In this section, we describe the compressed Newton search direction and the algorithms based on it.

\subsection{Compressed Newton search direction}
The approach in \cite{meng2022partial} employs the partial negative gradient as a search direction, i.e.,
\[
\mathbf{d}_{\text{sparse}} = {\cal H}_q (A^{\top}(y-Ax^{(p)})).
\]
This search direction is used to update the current iterate within a subspace defined by the index set
\[
\Omega^{p} := \mathcal{L}_q (A^{\top}(y-A x^{(p)})).
\]
Let $A_{\Omega^p} \in \mathbb{R}^{m \times q}$ and $A_{\overline{\Omega^p}} \in \mathbb{R}^{m \times(n-q)}$ denote the submatrices of $A$ consisting of the columns indexed by $\Omega^p$ and its complement $\overline{\Omega^p}$, respectively.
Recall that $A_{\Omega^p |}$, $A_{\overline{\Omega^p} |} \in \mathbb{R}^{m \times n}$ are the matrices obtained by zeroing out all columns of $A$ not indexed by $\Omega^p$ and $\overline{\Omega^p}$, respectively.
Given $\Omega^p$, the objective function in the problem \eqref{prob1} can be rewritten as
\begin{align}
y-Ax
&= y-(A_{\Omega^{p} |} + A_{\overline{\Omega^{p}}|}) x \nonumber \\
&= \tilde{y} - A_{\Omega^{p} |} x \nonumber \\
&= \tilde{y} - A_{\Omega^{p}} x_{\Omega^{p}}, \label{prob2}
\end{align}
where $x_{\Omega^{p}}$ is the restriction of $x$ to $\Omega^{p}$, and  $\tilde{y} = y - A_{\overline{\Omega^{p}}|} x$.

Define $g(x_{\Omega^{p}}) = \left\|\tilde{y} - A_{\Omega^{p}} x_{\Omega^{p}} \right\|_2^2$.
The expressions of gradient and Hessian of $g(x_{\Omega^{p}})$ are given respectively by
\[
\begin{aligned}
\nabla g\left(x_{\Omega^{p}}\right) = -A_{\Omega^{p}}^{\top} \left(\tilde{y} - A_{\Omega^{p}} x_{\Omega^{p}}\right)~~\text{and}~~
\nabla^2 g\left(x_{\Omega^{p}}\right) = A_{\Omega^{p}}^{\top} A_{\Omega^{p}}.
\end{aligned}
\]
When $q = |\Omega^{p}| \ll m$, the Hessian is typically invertible under suitable assumptions, such as the RIP (see Definition \ref{def-ric} later).
The standard Newton search direction for minimizing $g(x_{\Omega^{p}})$ is
\begin{align}
\mathbf{d}_{g\left(x_{\Omega^{p}}\right)\text{-ND}} 
&= - \left(\nabla^2 g\left(x_{\Omega^{p}}\right)\right)^{-1} \nabla g\left(x_{\Omega^{p}}\right)  \nonumber \\
&= (A_{\Omega^{p}}^{\top} A_{\Omega^{p}})^{-1} A_{\Omega^{p}}^{\top} \left(\tilde{y}-A_{\Omega^{p}} x_{\Omega^{p}}\right) \nonumber\\
&= \left(A_{\Omega^{p}}^{\top} A_{\Omega^{p}}\right)^{-1} A_{\Omega^{p}}^{\top} \left(y-Ax\right), \label{666D}
\end{align} 
where the last equality follows from the identity \eqref{prob2}.
To embed this subspace vector into the full space, we define the matrix $M(\Omega^p) \in \mathbb{R}^{n \times n}$ as follows: 
\begin{equation} \label{hessian}
M(\Omega^p)_{\Omega^p \times \Omega^p}=(A_{\Omega^p}^{\top} A_{\Omega^p})^{-1}
\end{equation}
and other entries $\left[M\left(\Omega^p\right)\right]_{i j} \equiv 0$ for $i \notin \Omega^p$ or $j \notin \Omega^p$ and $i \neq j$, and diagonal entries  $\left[M\left(\Omega^p\right)\right]_{i i}=\alpha$ for $i \notin \Omega^p$. 
Motivated by (\ref{666D}), in this paper, we define the compressed Newton search direction as
\begin{equation} \label{direction}
\mathbf{d}_{\text{CN}} := M(\Omega^p) Z(\Omega^p) A^{\top}(y-A x^{(p)}),
\end{equation}
where $Z(\Omega^{(p)})=\operatorname{diag}\left(z_1, z_2, \ldots, z_n\right)$ is a diagonal matrix with entries
\begin{equation} \label{diamatrix}
z_i= \begin{cases}1, & \text { if } i \in \Omega^{(p)}, \\ \gamma, & \text { if } i \notin \Omega^{(p)},\end{cases}
\end{equation}
and $\gamma > 0$ is a parameter.
Clearly, \eqref{666D} can be seen as the extreme case of (\ref{direction}) corresponding to the case $ \gamma=0.$ 
In \eqref{direction}, we compute a standard Newton direction within the subspace defined by $\Omega^p$, and then we inflate it to the full space by assigning $\alpha > 0$ to the diagonal entries outside $\Omega^p$.
Moreover, the matrix $Z(\Omega^p)$ emphasizes the best $q$-terms of the gradient while attenuating the remaining entires via the scaling factor $\gamma > 0$, which ensures that the search direction containing sufficient information.
When $q \ll n$, this approach significantly reduces the computational cost of finding a Newton-type search direction in $\mathbb{R}^n$.

It is not difficult to check that $\mathbf{d}_{\mathrm{CN}}$ is a descent direction for the objective $f(x)=\|y-A x\|_2^2$ at $x=x^{(p)}$. Indeed, one has
\[
\begin{aligned}
\left(\nabla f(x)\right)^{\top} \mathbf{d}_{\text{CN}} 
=& -(A^{\top}(y-A x))^{\top} ~ M(\Omega^p) Z(\Omega^p) (A^{\top}(y-A x))_{\Omega^p} \\
=& -(A^{\top}(y-A x))_{\Omega^p}^{\top} ~ M(\Omega^p) (A^{\top}(y-A x))_{\Omega^p} - \gamma \alpha \left\|(A^{\top}(y-A x))_{\overline{\Omega^p}}\right\|_2^2 \\
=& -(A^{\top}(y-A x))_{\Omega^p}^{\top} (A_{\Omega^p}^{\top} A_{\Omega^p})^{-1} (A^{\top}(y-A x))_{\Omega^p} \\
& - \gamma \alpha \left\|(A^{\top}(y-A x))_{\overline{\Omega^p}}\right\|_2^2 \\
<& 0,
\end{aligned}
\]
provided $A^{\top}(y-A x) \neq \mathbf{0}$.

\subsection{Compressed Newton-direction-based algorithms}
Using $\mathbf{d}_{\text{CN}}$ in \eqref{direction} leads to the iterative scheme
\begin{equation} \label{model2}
x^{(p+1)} = \mathcal{T}_k (u^{(p)}) = \mathcal{T}_k(x^{(p)}+\lambda \mathbf{d}_{\text{CN}}),
\end{equation}
where $u^{(p)} = x^{(p)} + \lambda \mathbf{d}_{\text{CN}}$ with stepsize $\lambda$.
Choosing $\mathcal{T}_k$ as the hard thresholding operator in \eqref{model2} leads to the compressed Newton-direction-based thresholding (CNHT) algorithm (see Algorithm \ref{CNHT}), i.e.,
\[
x^{(p+1)} = {\cal H}_k (u^{(p)}).
\]
It can be further enhanced by a post-projection (or pursuit) step, which aims to reduce the objective function over the selected support, e.g., see \eqref{pursuit} for details.
This enhanced version of CNHT, termed the \textit{compressed Newton-direction-based thresholding pursuit} (CNHTP) algorithm is detailed in Algorithm \ref{CNTP}.
Hard thresholding appears twice in this framework, but serves two distinct purposes.
The first application, in \eqref{alg1-CN}, acts on the negative gradient and retains the largest entries in magnitude to select the subspace for computing the Newton direction.
The second step is to generate a $k$-sparse vector on the updated iteration based on the search direction to ensure the feasibility of iterate to \eqref{prob1}.

\begin{algorithm}
\caption{Compressed Newton-direction-based thresholding (CNHT)} \label{CNHT}
\begin{enumerate}[Step 1.]
\item (Initialization) Input matrix $A$, vector $y$, sparsity level $k$, integer number $q \in [k ,2k]$, stepsize $\lambda$, and initial point  $x^0 = \mathbf{0}$.
\item (Compressed Newton direction) Compute the intermediate vector
\begin{equation} \label{alg1-CN}
u^{(p)} =x^{(p)}+\lambda M(\Omega^p) Z(\Omega^p) A^{\top} (y-A x^{(p)} ),
\end{equation}
where $M(\Omega^p)$ and $Z(\Omega^p)$ are given in \eqref{hessian} and \eqref{diamatrix}, respectively.
\item Let
\[
x^{(p+1)} = {\cal H}_k (u^{(p)}).
\]
\end{enumerate}
Repeat Steps 2 and 3 until a stopping criterion is met.
\end{algorithm}

\begin{algorithm}
\caption{Compressed Newton-direction-based thresholding pursuit (CNHTP)} \label{CNTP}
\begin{enumerate}[Step 1.]
\item Same as Algorithm \ref{CNHT}.
\item Same as Algorithm \ref{CNHT}.
\item Let
\begin{align}
S^{(p+1)} &= \operatorname{supp} ({\cal H}_k (u^{(p)})) \nonumber \\
x^{(p+1)} & = \arg \min _z \left\{ \|y-A z\|_2^2: \operatorname{supp}(z) \subseteq S^{(p+1)} \right\}. \label{pursuit}
\end{align}
\end{enumerate}
Repeat Steps 2 and 3 until a stopping criterion is met.
\end{algorithm}

Replacing ${\cal H}_k (\cdot)$ in \eqref{model2} by the optimal thresholding operator $\mathcal{Z}^{\#}_k$ leads to the iterative scheme
\[
x^{(p+1)} = \mathcal{Z}^{\#}_k (u^{(p)}) = u^{(p)} \otimes \hat{w}^{(p)},
\]
where $\hat{w}^{(p)}$ solves the binary optimization problem
\begin{equation} \label{ot}
\min_{\hat{w}} \left\{\|y-A(u^{(p)} \otimes \hat{w})\|_2^2: \mathbf{e}^{\top} \hat{w}=k, \hat{w} \in\{0,1\}^n\right\},
\end{equation}
and $\mathbf{e}$ denotes the vector of ones.
However, as indicated in \cite{zhao2020optimal}, \eqref{ot} is a binary optimization problem which might be expensive to solve in general, and thus one can relax the binary constraint in \eqref{ot} to obtain its convex relaxation
\[
\min_{w} \left\{\|y-A(u^{(p)} \otimes w)\|_2^2: \mathbf{e}^{\top} w=k, \mathbf{0} \le w \le \mathbf{e} \right\},
\]
to which the solution $w^{(p)}$ might not be exactly $k$-sparse.
Therefore, a hard thresholding operator is applied to the vector $u^{(p)} \otimes w^{(p)}$ to generate a $k$-sparse iterate.
This relaxation leads to another two algorithms: compressed Newton direction with optimal thresholding (CNOT) and its enhanced extension called CNOTP (see Algorithms \ref{CNOT} and \ref{CNOTP}, respectively).

\begin{algorithm}
\caption{Compressed Newton-direction-based optimal thresholding (CNOT)} \label{CNOT}
\begin{enumerate}[Step 1.]
\item Same as Algorithm \ref{CNHT}.
\item Same as Algorithm \ref{CNHT}.
\item Let
\[
\begin{aligned}
w^{(p)} &=\arg \min _{w} \{\|y-A(w \otimes u^{(p)})\|_{2}^{2}: \mathbf{e}^{\top} w=k, \mathbf{0} \le w \le \mathbf{e} \}, \\
x^{(p+1)} &= {\cal H}_k (w^{(p)} \otimes u^{(p)}).
\end{aligned}
\]
\end{enumerate}
Repeat Steps 2 and 3 until a stopping criterion is met.
\end{algorithm}

\begin{algorithm}
\caption{Compressed Newton-direction-based optimal thresholding pursuit (CNOTP)} \label{CNOTP}
\begin{enumerate}[Step 1.]
\item Same as Algorithm \ref{CNHT}.
\item Same as Algorithm \ref{CNHT}.
\item Let
\begin{align}
w^{(p)} &=\arg \min _{w} \{\|y-A(w \otimes u^{(p)})\|_{2}^{2}: \mathbf{e}^{\top} w=k, \mathbf{0} \le w \le \mathbf{e} \}, \nonumber \\
S^{(p+1)} &=  \operatorname{supp} ({\cal H}_k (w^{(p)} \otimes u^{(p)})), \nonumber \\
x^{(p+1)} &= \arg \min_z \left\{\|y-A z\|_2^2: \operatorname{supp}(z) \subseteq S^{(p+1)} \right\}. \nonumber
\end{align}
\end{enumerate}
Repeat Steps 2 and 3 until a stopping criterion is met.
\end{algorithm}

\begin{remark} 
The algorithms in this study differ from existing Newton-type methods in the sparse optimization literature primarily in their choice of search direction.
The approaches in \cite{zhou2021global, zhou2022gradient} employ gradient-based directions followed by a Newton correction after thresholding.
The method in \cite{meng2020newton} computes a Newton-like step within $\mathbb{R}^n$ to avoid Hessian singularity.
Other methods, such as those in \cite{zhou2021newton, liao2024subspace, ye2025subspace}, adaptively switch between Newton and proximal gradient steps based on certain feasibility conditions.
In contrast, the search direction in this study is a compressed Newton one in $\mathbb{R}^q$ ($q \ll n$), which incorporates scaled components of the rest of gradient.
\end{remark}

\section{Theoretical analysis} \label{theo}
We establish the convergence of the proposed algorithms under a matrix property in terms of the restricted isometry constant (RIC), defined as follows.

\begin{definition} \emph {\cite{candes2005decoding}} \label{def-ric}
The restricted isometry constant of order $k$ for a matrix $A \in \mathbb{R}^{m \times n}$ ($m \ll n$), denoted as $\delta_k \in [0,1)$, is defined as the smallest $0 \le \delta < 1$ such that
\[
\begin{aligned}
(1-\delta)\|x\|_2^2 \leq\|A x\|_2^2 \leq(1+\delta)\|x\|_2^2
\end{aligned}
\]
for all $k$-sparse vectors $x$.
The matrix $A$ is said to satisfy the restricted isometry property (RIP) if $\delta_k < 1$.
\end{definition}

We now present three technical lemmas that are useful for the convergence analysis of the proposed algorithms.
The first lemma characterizes a property of matrices that satisfy the restricted isometry property, and the second concerns a property of the hard thresholding operator.

\begin{lemma} \emph{\cite{foucart2013mathematical}} \label{lemma-rip}
Let $A \in \mathbb{R}^{m \times n}$ satisfy the RIP of order $t$ with constant $\delta_t$.
For any vector $v \in \mathbb{R}^n$ and any index set $\Lambda \subseteq [N]$, one has
\begin{enumerate}[(i)]
\item $\left\|\left(\left(I-A^T A\right) v\right)_{\Lambda}\right\|_2 \le \delta_t\|v\|_2$, if $|\Lambda \cup \operatorname{supp}(v)| \leq t$;
\item $\left\|\left(A^T v\right)_{\Lambda}\right\|_2 \leq \sqrt{1+\delta_t}\|v\|_2$, if $|\Lambda| \leq t$.
\end{enumerate}
\end{lemma}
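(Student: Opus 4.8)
The plan is to prove both parts through the variational (dual) characterization of the Euclidean norm of a coordinate-restricted vector, combined with two elementary consequences of the RIP. First I would record these two facts from Definition \ref{def-ric}: for any index set $S$ with $|S| \le t$, the symmetric matrix $A_S^{T} A_S$ has all its eigenvalues in $[1-\delta_t, 1+\delta_t]$, so that $\|I - A_S^{T} A_S\|_{2\to 2} \le \delta_t$ (where $\|\cdot\|_{2\to 2}$ is the spectral norm); and, directly from the upper RIP inequality, $\|A_S z\|_2 \le \sqrt{1+\delta_t}\,\|z\|_2$ for every $z$ supported on $S$. These are exactly the two estimates that feed parts (i) and (ii).

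For part (i) I would set $S = \Lambda \cup \operatorname{supp}(v)$, which satisfies $|S| \le t$ by hypothesis, and write
\[
\left\|\left((I - A^{T} A)v\right)_\Lambda\right\|_2 = \max_{\substack{\|w\|_2 = 1 \\ \operatorname{supp}(w) \subseteq \Lambda}} \left\langle w,\, (I - A^{T} A)v \right\rangle.
\]
For any admissible $w$ I would expand the inner product as $\langle w, v\rangle - \langle Aw, Av\rangle$. Since both $w$ and $v$ are supported inside $S$, this quantity equals $\langle w_S,\,(I - A_S^{T} A_S)\, v_S\rangle$, where $w_S, v_S$ denote the restrictions to the coordinates in $S$. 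Cauchy--Schwarz together with the spectral-norm bound above then gives $\delta_t \|w_S\|_2 \|v_S\|_2 = \delta_t \|v\|_2$, which is the desired estimate.

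For part (ii) I would observe that $(A^{T} v)_\Lambda = A_\Lambda^{T} v$ and apply the same dual characterization,
\[
\left\|(A^{T} v)_\Lambda\right\|_2 = \max_{\substack{\|w\|_2 = 1 \\ \operatorname{supp}(w)\subseteq\Lambda}} \langle w,\, A^{T} v\rangle = \max_{\substack{\|w\|_2 = 1 \\ \operatorname{supp}(w)\subseteq\Lambda}} \langle Aw,\, v\rangle.
\]
Every competitor $w$ is supported on $\Lambda$ with $|\Lambda| \le t$, so the upper RIP bound yields $\|Aw\|_2 \le \sqrt{1+\delta_t}\,\|w\|_2 = \sqrt{1+\delta_t}$, and Cauchy--Schwarz closes the estimate by $\langle Aw, v\rangle \le \|Aw\|_2\|v\|_2 \le \sqrt{1+\delta_t}\,\|v\|_2$.

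The only genuinely delicate point, and the step I would treat most carefully, is the index bookkeeping in part (i): the hypothesis $|\Lambda \cup \operatorname{supp}(v)| \le t$ is precisely what allows $w$ and $v$ to be regarded simultaneously as vectors supported on a common set $S$ of size at most $t$, which is what licenses replacing $I - A^{T}A$ by its $|S|\times|S|$ principal compression $I - A_S^{T} A_S$ and invoking the RIP there. Without this joint-support observation the spectral bound of order $t$ would not apply; everything else reduces to routine applications of Cauchy--Schwarz. Since the statement is quoted from \cite{foucart2013mathematical}, I would present this argument only as a brief justification rather than in full detail.
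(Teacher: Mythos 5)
Your proof is correct. The paper does not prove this lemma at all---it is quoted directly from the cited reference \cite{foucart2013mathematical}---and your argument is essentially the standard one found there: the dual characterization $\|u_\Lambda\|_2 = \max\{\langle w,u\rangle : \|w\|_2=1,\ \operatorname{supp}(w)\subseteq\Lambda\}$, the spectral bound $\|I-A_S^{T}A_S\|_{2\to 2}\le\delta_t$ for $|S|\le t$, and Cauchy--Schwarz. Your emphasis on the joint-support bookkeeping in part (i) is exactly the right point to flag, and you also implicitly correct a small typo in the statement: in part (ii) the vector $v$ must lie in $\mathbb{R}^m$ (not $\mathbb{R}^n$) for $A^{T}v$ and the pairing $\langle Aw, v\rangle$ to make sense, which is how your proof treats it.
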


\begin{lemma} \emph{\cite{zhao2023improved} } \label{lemma-hard}
For any vector $z \in \mathbb{R}^n$ and any $k$-sparse vector $u \in \mathbb{R}^n$ (i.e., $\|u\|_0 \leq k$), one has
\[
\left\|u-\mathcal{H}_k(z)\right\|_2 \leq \varphi \left\|(u-z)_{\Lambda \cup S^*}\right\|_2,
\]
where $\varphi := \frac{\sqrt{5}+1}{2}$, $\Lambda=\operatorname{supp}(u)$ and $S^*=\operatorname{supp}\left(\mathcal{H}_k(z)\right)$.
\end{lemma}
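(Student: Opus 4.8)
The plan is to reduce the claimed bound to a one-variable scalar inequality whose optimal constant is exactly $\varphi^2$. Write $w := \mathcal{H}_k(z)$, so that $\operatorname{supp}(w) = S^*$ and $w_{S^*} = z_{S^*}$. Since both $u$ and $w$ vanish outside $\Lambda \cup S^*$, the difference $u - w$ is supported on $\Lambda \cup S^*$, and I would split this index set into $S^*$ and $D_1 := \Lambda \setminus S^*$. On $S^*$ one has $(u-w)_{S^*} = (u-z)_{S^*}$ because $w=z$ there, while on $D_1$ one has $(u-w)_{D_1} = u_{D_1}$ because $w$ vanishes on $D_1$. Hence
\[
\|u - w\|_2^2 = \|(u-z)_{S^*}\|_2^2 + \|u_{D_1}\|_2^2 .
\]
The first term is already part of the target quantity, so the whole difficulty is to control $\|u_{D_1}\|_2$.

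For that, set $D_2 := S^* \setminus \Lambda$. In the generic case $\|z\|_0 \ge k$ one has $|S^*| = k \ge |\Lambda|$, whence $|D_1| = |\Lambda| - |\Lambda \cap S^*| \le |S^*| - |\Lambda \cap S^*| = |D_2|$, so there exists an injection $\pi : D_1 \to D_2$. (The degenerate case $\|z\|_0 < k$ gives $w = z$, and the inequality is then immediate since $\varphi > 1$.) The defining property of $\mathcal{H}_k$ is that every retained coordinate dominates in magnitude every discarded one; as $D_1 \subseteq \overline{S^*}$ and $D_2 \subseteq S^*$, this yields $|z_i| \le |z_{\pi(i)}|$ for each $i \in D_1$. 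Crucially, the coordinates of $D_2$ lie outside $\Lambda = \operatorname{supp}(u)$, so $u$ vanishes there and $|z_{\pi(i)}| = |(u-z)_{\pi(i)}|$. Summing over the injection, $z_i^2 \le z_{\pi(i)}^2 = (u-z)_{\pi(i)}^2$, gives
\[
\|z_{D_1}\|_2 \le \|(u-z)_{D_2}\|_2 \le \|(u-z)_{S^*}\|_2 ,
\]
and a triangle inequality then bounds $\|u_{D_1}\|_2 \le \|(u-z)_{D_1}\|_2 + \|z_{D_1}\|_2 \le \|(u-z)_{D_1}\|_2 + \|(u-z)_{S^*}\|_2$.

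Abbreviate $a := \|(u-z)_{S^*}\|_2$ and $b := \|(u-z)_{D_1}\|_2$. Because $S^*$ and $D_1$ partition $\Lambda \cup S^*$, the right-hand side of the lemma equals $\varphi^2(a^2 + b^2)$, while the two displays above give $\|u - w\|_2^2 \le a^2 + (a+b)^2 = 2a^2 + 2ab + b^2$. It therefore remains to prove the scalar inequality $2a^2 + 2ab + b^2 \le \varphi^2 (a^2 + b^2)$ for all $a,b \ge 0$. Setting $t := b/a$ and maximizing $(2 + 2t + t^2)/(1 + t^2)$, the first-order condition is $t^2 + t - 1 = 0$, i.e. $t = 1/\varphi$, at which the ratio equals $\varphi + 1 = \varphi^2$; this is precisely where the golden ratio enters.

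I expect the magnitude-counting step to be the only real obstacle: one must combine the cardinality bound $|D_1| \le |D_2|$ with the sorting property of $\mathcal{H}_k$ and the vanishing of $u$ on $D_2$ to convert the uncontrolled quantity $\|z_{D_1}\|_2$ into one measured by $(u-z)$ on $S^*$. The concluding scalar optimization is routine once $\|u_{D_1}\|_2 \le a + b$ is in hand, and the degenerate support case $\|z\|_0 < k$ should be checked separately, as indicated above.
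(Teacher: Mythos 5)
Your proposal is correct, but note that the paper itself does not prove this lemma at all: it is quoted verbatim from the cited reference (Zhao and Luo, \emph{Sci.\ China Math.}, 2023), so your argument is a self-contained reconstruction rather than a parallel to anything in the paper. Checking your steps: the decomposition $\|u-w\|_2^2 = \|(u-z)_{S^*}\|_2^2 + \|u_{D_1}\|_2^2$ is exact because $w=z$ on $S^*$ and $w=0$ on $D_1=\Lambda\setminus S^*$; the cardinality count $|D_1|\le|D_2|$ uses only $|\Lambda|\le k=|S^*|$ (valid in the generic case $\|z\|_0\ge k$, and you correctly dispose of $\|z\|_0<k$ separately, where $w=z$ makes the bound trivial since $u-z$ is then supported on $\Lambda\cup S^*$); the injection combined with the domination property of $\mathcal{H}_k$ and the vanishing of $u$ on $D_2=S^*\setminus\Lambda$ gives $\|z_{D_1}\|_2\le\|(u-z)_{D_2}\|_2\le\|(u-z)_{S^*}\|_2$, which is the one genuinely nontrivial step and is argued correctly; and the closing scalar inequality $2a^2+2ab+b^2\le\varphi^2(a^2+b^2)$ holds because, using $\varphi^2=\varphi+1$, it rearranges to
\begin{equation*}
\frac{1}{\varphi}a^2-2ab+\varphi b^2=\Bigl(\tfrac{a}{\sqrt{\varphi}}-\sqrt{\varphi}\,b\Bigr)^2\ge 0,
\end{equation*}
with equality at $b/a=1/\varphi$, confirming that $\varphi$ is the sharp constant and matching your Lagrangian computation. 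This is essentially the same style of argument (support splitting, exchange property, quadratic optimization yielding the golden ratio) as in the cited Zhao--Luo paper and the earlier tight hard-thresholding bounds it improves upon, so your proof buys the reader a complete justification that the present paper delegates to the literature.
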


The next lemma is crucial to our analysis for the proposed algorithms.

\begin{lemma} \label{lemma1}
Let $A \in \mathbb{R}^{m \times n}$ be a matrix satisfying the RIP, and $\lambda>0$ be a constant.
For any index set $\Omega \subseteq [N]$ with cardinality $|\Omega| \le q$, matrix $M(\Omega) \in \mathbb{R}^{n \times n}$ is given by
\[
M(\Omega)_{\Omega \times \Omega} = (A_{\Omega}^{\top} A_{\Omega})^{-1},
\]
with other off-diagonal entries equal to zero, and diagonal entries $[M(\Omega)]_{ii} = \alpha$ for $i \notin \Omega$.
Let $Z(\Omega) \in \mathbb{R}^{n \times n}$ be a diagonal matrix given as
\[
Z(\Omega) = \operatorname{diag}(z_1, z_2, \ldots, z_n),
\]
where  $z_i=1$ for $i \in \Omega$ and $z_i=\gamma > 0$ for $i \notin \Omega$.
Let the parameters $\alpha$ and $\gamma$ satisfy $\frac{1}{1+\delta_q} \leq \alpha \gamma \leq \frac{1}{1-\delta_q}$.
Then,
\begin{enumerate}[(i)]
\item For any $2k$-sparse vector $v \in \mathbb{R}^n$ and a set $\Lambda \subseteq [N]$ with cardinality $|\Lambda| \le k$, one has
\begin{equation} \label{lemma1-4}
\|[I - \lambda M(\Omega) Z(\Omega) A^{\top} A v]_{\Lambda}\|_2
\le 
\left(\delta_{3k} + \Delta(\lambda, \delta_q)\left(1 + \sqrt{\left\lceil \tfrac{n}{k} \right\rceil}\, \delta_{3k}\right)\right) 
\|v\|_2,
\end{equation}
where
\begin{equation} \label{lemma1-10}
\Delta (\lambda, \delta_q) := \max \left\{\left|1-\tfrac{\lambda}{1-\delta_q}\right|,\left|1-\tfrac{\lambda}{1+\delta_q}\right|\right\} .
\end{equation}
\item For any vector $u \in \mathbb{R}^n$, one has
\begin{equation} \label{lemma1-ii}
\|M(\Omega) Z(\Omega) A^{\top} u\|_2 \le \frac{\sqrt{\left\lceil\frac{n}{k}\right\rceil\left(1+\delta_k\right)}}{1-\delta_q} \|u\|_2.
\end{equation}
\end{enumerate}
\end{lemma}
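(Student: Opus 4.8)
The plan is to build both estimates on a single structural observation about the operator $B := M(\Omega)Z(\Omega)$. Because $M(\Omega)$ has the $\Omega\times\Omega$ block $(A_\Omega^{\top}A_\Omega)^{-1}$ and diagonal value $\alpha$ off $\Omega$, while $Z(\Omega)$ is diagonal with value $1$ on $\Omega$ and $\gamma$ on $\overline{\Omega}$, the product $B$ is block diagonal with respect to the partition of coordinates into $\Omega$ and $\overline{\Omega}$: it acts as $(A_\Omega^{\top}A_\Omega)^{-1}$ on the $\Omega$-coordinates and as $\alpha\gamma I$ on the $\overline{\Omega}$-coordinates. Since $|\Omega|\le q$ and $A$ obeys the RIP, the eigenvalues of $A_\Omega^{\top}A_\Omega$ lie in $[1-\delta_q,\,1+\delta_q]$, so those of $(A_\Omega^{\top}A_\Omega)^{-1}$ lie in $[\tfrac{1}{1+\delta_q},\tfrac{1}{1-\delta_q}]$. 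As $I-\lambda B$ is block diagonal, its spectrum is the union of the block spectra, consisting of values $1-\lambda\mu$ with $\mu\in[\tfrac{1}{1+\delta_q},\tfrac{1}{1-\delta_q}]$ together with $1-\lambda\alpha\gamma$; the hypothesis $\tfrac{1}{1+\delta_q}\le\alpha\gamma\le\tfrac{1}{1-\delta_q}$ places all of them in $[\,1-\tfrac{\lambda}{1-\delta_q},\,1-\tfrac{\lambda}{1+\delta_q}\,]$, yielding the operator-norm bound $\|I-\lambda B\|_2\le\Delta(\lambda,\delta_q)$. The same eigenvalue information gives $\|B\|_2\le\tfrac{1}{1-\delta_q}$, which I reserve for part (ii).

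For part (i) the central device is the splitting
\[
(I-\lambda B A^{\top}A)v = (I-A^{\top}A)v + (I-\lambda B)(A^{\top}A\,v),
\]
verified at once by cancellation of the $\pm A^{\top}A\,v$ terms. Restricting to $\Lambda$ and applying the triangle inequality, I treat the first summand directly: since $v$ is $2k$-sparse and $|\Lambda|\le k$, one has $|\Lambda\cup\operatorname{supp}(v)|\le 3k$, so Lemma \ref{lemma-rip}(i) gives $\|[(I-A^{\top}A)v]_\Lambda\|_2\le\delta_{3k}\|v\|_2$, the leading term of \eqref{lemma1-4}. For the second summand I discard the restriction (which only lowers the norm) and invoke the spectral bound: $\|[(I-\lambda B)A^{\top}A\,v]_\Lambda\|_2\le\|I-\lambda B\|_2\,\|A^{\top}A\,v\|_2\le\Delta(\lambda,\delta_q)\,\|A^{\top}A\,v\|_2$. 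It remains to bound $\|A^{\top}A\,v\|_2$; writing $A^{\top}A\,v=v-(I-A^{\top}A)v$ and partitioning $[N]$ into $\lceil n/k\rceil$ index sets of size at most $k$, Lemma \ref{lemma-rip}(i) applied on each block (each union with $\operatorname{supp}(v)$ has size $\le 3k$) and summing squares gives $\|(I-A^{\top}A)v\|_2\le\sqrt{\lceil n/k\rceil}\,\delta_{3k}\|v\|_2$, hence $\|A^{\top}A\,v\|_2\le(1+\sqrt{\lceil n/k\rceil}\,\delta_{3k})\|v\|_2$. Adding the two contributions reproduces \eqref{lemma1-4} exactly.

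Part (ii) uses the same partition idea. From $\|B\|_2\le\tfrac{1}{1-\delta_q}$ one obtains $\|BA^{\top}u\|_2\le\tfrac{1}{1-\delta_q}\|A^{\top}u\|_2$, and partitioning $[N]$ into $\lceil n/k\rceil$ blocks $\Lambda_j$ of size at most $k$, Lemma \ref{lemma-rip}(ii) bounds $\|(A^{\top}u)_{\Lambda_j}\|_2\le\sqrt{1+\delta_k}\,\|u\|_2$ on each block; summing the squares gives $\|A^{\top}u\|_2\le\sqrt{\lceil n/k\rceil(1+\delta_k)}\,\|u\|_2$, which is precisely \eqref{lemma1-ii}.

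The step I expect to demand the most care is isolating the sharp factor $\Delta(\lambda,\delta_q)$ in part (i). A more naive splitting, such as $(I-\lambda B)v+\lambda B(I-A^{\top}A)v$, forces the inverse $(A_\Omega^{\top}A_\Omega)^{-1}$ onto a residual vector and yields the weaker coefficient $\tfrac{\lambda}{1-\delta_q}$ in place of $\Delta(\lambda,\delta_q)$, which would not match the stated bound. The chosen decomposition is arranged so that the preconditioner error enters only through $(I-\lambda B)$ acting on a vector, allowing the spectral estimate $\|I-\lambda B\|_2\le\Delta(\lambda,\delta_q)$ to carry the argument. Verifying that spectral bound from the $\alpha\gamma$ hypothesis, and ensuring every restricted index set stays within size $3k$ so that $\delta_{3k}$ (rather than a larger order) governs the estimate, are the two points requiring attention.
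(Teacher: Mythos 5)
Your proof is correct and follows essentially the same route as the paper's: the identical splitting $(I-\lambda M Z A^{\top}A)v=(I-A^{\top}A)v+(I-\lambda MZ)A^{\top}Av$, the same block-diagonal spectral bound $\|I-\lambda M(\Omega)Z(\Omega)\|_2\le\Delta(\lambda,\delta_q)$ (the paper makes the block structure explicit via a permutation matrix), and the same partition of $[N]$ into $\lceil n/k\rceil$ blocks for both $\|(I-A^{\top}A)v\|_2$ and $\|A^{\top}u\|_2$. The only cosmetic difference is that you state the spectral estimate as an inequality rather than the paper's (slightly loose) equality, which is if anything more careful.
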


\begin{proof}
We first show case (i).
Note that
\[
\begin{aligned}
I - \lambda M(\Omega) Z(\Omega) A^{\top} A
&= (I - A^{\top} A) + (I - \lambda M(\Omega) Z(\Omega))A^{\top} A,
\end{aligned}
\]
by which we obtain that
\begin{align}
\|[I - \lambda M(\Omega) Z(\Omega) A^{\top} A v]_{\Lambda}\|_2 
&= \| [ (I - A^{\top} A)v ]_{\Lambda} + [ (I - \lambda M(\Omega) Z(\Omega)) A^{\top} A v ]_{\Lambda} \|_2 \nonumber \\
&\le \|[ (I - A^{\top} A) v ]_{\Lambda}\|_2
+ \|[ (I-\lambda M(\Omega) Z(\Omega)) A^{\top} A v ]_{\Lambda}\|_2 \nonumber \\
&\le \delta_{3k} \|v\|_2 
+ \|[ (I-\lambda M(\Omega) Z(\Omega)) A^{\top} A v ]_{\Lambda}\|_2, \label{lemma1-5}
\end{align}
where inequality \eqref{lemma1-5} follows from Lemma \ref{lemma-rip} (i), since $|\operatorname{supp}(v) \cup \Lambda| \leq 3 k$.
We now bound the second term in \eqref{lemma1-5}. Observe that
\begin{align}
\|[ (I-\lambda M(\Omega) Z(\Omega)) A^{\top} A v ]_{\Lambda}\|_2 
&\le \|(I-\lambda M(\Omega) Z(\Omega)) A^{\top} A v\|_2 \nonumber \\
&\le \|I-\lambda M(\Omega) Z(\Omega)\|_2 \|A^{\top} A v\|_2 \nonumber \\
&= \|I-\lambda M(\Omega) Z(\Omega)\|_2 \|(I+A^{\top} A-I) v\|_2 \nonumber \\
&\le \|I-\lambda M(\Omega) Z(\Omega)\|_2 \left(\left\|v\right\|_2 + \|(A^{\top} A-I) v\|_2 \right). \label{lemma1-3}
\end{align}
Next, we bound the term $\|I - \lambda M(\Omega) Z(\Omega)\|_2$.
Let $P \in \mathbb{R}^{n \times n}$ be a permutation matrix such that
\begin{equation} \label{lemma1-7}
\widetilde{M}(\Omega):=P^{\top} M(\Omega) P=\left[\begin{array}{cc}
\left(A_{\Omega}^{\top} A_{\Omega}\right)^{-1} & 0 \\
0 & \alpha I
\end{array}\right], \quad \widetilde{Z}(\Omega):=P^{\top} Z(\Omega) P
=\left[\begin{array}{cc}
I & 0 \\
0 & \gamma I
\end{array}\right].
\end{equation}
It follows that $M(\Omega) Z(\Omega)=P \widetilde{M}(\Omega) \widetilde{Z}(\Omega) P^{\top}$.
Since $P$ is orthogonal, we have
\[
\|I - \lambda M(\Omega) Z(\Omega)\|_2 = \| P (I - \lambda \widetilde{M}(\Omega) \widetilde{Z}(\Omega)) P^{\top} \|_2= \|I - \lambda \widetilde{M}(\Omega) \widetilde{Z}(\Omega)\|_2,
\]
where
\[
I - \lambda \widetilde{M}(\Omega) \widetilde{Z}(\Omega) =
\begin{bmatrix}
I - \lambda (A_{\Omega}^{\top} A_{\Omega})^{-1} & 0 \\
0 & I - \lambda \alpha \gamma I
\end{bmatrix}.
\]
The RIP inequality implies that the eigenvalues of $(A_{\Omega}^{\top} A_{\Omega})^{-1}$ lie in $\left[\frac{1}{1+\delta_q}, \frac{1}{1-\delta_q}\right]$.
Therefore, we have
\[
\|I - \lambda \widetilde{M}(\Omega) \widetilde{Z}(\Omega)\|_2
= \max \left\{ \left|1 - \tfrac{\lambda}{1 - \delta_q}\right|,\; \left|1 - \tfrac{\lambda}{1 + \delta_q}\right|,\; |1 - \lambda \alpha \gamma| \right\}.
\]
Define $$\Delta(\lambda, \delta_q)
:= \max \left\{\left|1-\frac{\lambda}{1-\delta_q}\right|,\left|1-\frac{\lambda}{1+\delta_q}\right|\right\} .$$
It is easy to verify that when $\frac{1}{1+\delta_q} \le \alpha \gamma \le \frac{1}{1-\delta_q}$, one has  $|1 - \lambda \alpha \gamma| \le \Delta(\lambda, \delta_q)$.From the above analysis, we conclude that
\begin{equation} \label{lemma1-1}
\|I-\lambda M(\Omega) Z(\Omega)\|_2 = \|I - \lambda \widetilde{M}(\Omega) \widetilde{Z}(\Omega)\|_2= \Delta(\lambda, \delta_q).
\end{equation}

Next, partition the index set $[N] = \{1,2,\ldots,n\}$ into $t := \left\lceil \tfrac{n}{k} \right\rceil$ disjoint subsets $N_1, \ldots, N_t$, each containing at most $k$ elements, so that $N_i \cap N_j = \emptyset$ for $i \neq j$.  
Then we write
\begin{equation} \label{lemma1-6}
\|(A^{\top} A - I) v\|_2 
= \|[(A^{\top} A - I) v]_{N_1 \cup N_2 \cup \dots \cup N_t}\|_2.
\end{equation}
Applying Lemma \ref{lemma-rip} with $|\operatorname{supp}(v) \cup N_i | \leq 3 k$, we obtain
\[
\|[(A^{\top} A - I)v]_{N_1 \cup N_2 \cup \dots \cup N_t}\|_2^2 
= \sum_{i=1}^{t} \|[(A^{\top} A - I)v]_{N_i}\|_2^2 
\le t \, \delta_{3k}^2 \, \|v\|_2^2.
\]
Thus, we have
\begin{equation} \label{lemma1-2}
\|(A^{\top} A - I) v\|_2 \le \sqrt{\left\lceil \tfrac{n}{k} \right\rceil} \delta_{3k} \|v\|_2.
\end{equation}
Finally, substituting \eqref{lemma1-1} and \eqref{lemma1-2} into \eqref{lemma1-3} and combining \eqref{lemma1-5} yields the desired bound \eqref{lemma1-4}.
The proof of case (i) is complete.

We now turn to case (ii).
From \eqref{lemma1-7}, we immediately obtain
\begin{equation} \label{lemma1-8}
\|M(\Omega) Z(\Omega) A^{\top} u\|_2 \le \|M(\Omega) Z(\Omega)\|_2 \|A^{\top} u\|_2 = \|\widetilde{M}(\Omega) \widetilde{Z}(\Omega)\|_2 \|A^{\top} u\|_2.
\end{equation}
If $\alpha$ and $\gamma$ are chosen as $\frac{1}{1+\delta_q} \leq \alpha \gamma \le \frac{1}{1-\delta_q}$, then it follows directly that
\[
\|\widetilde{M}(\Omega) \widetilde{Z}(\Omega)\|_2 = \frac{1}{1-\delta_q}.
\]
Next, partition $[N]$ into the $t=\lceil n / k\rceil$ disjoint subsets $N_1, \ldots, N_t$ as in \eqref{lemma1-6}.
Applying Lemma \ref{lemma-rip} (ii) to each block $N_i$ gives
\[
\| A^{\top} u \|_2 = \| (A^{\top} u)_{N_1 \cup \dots \cup N_t} \|_2 \le \sqrt{\left\lceil\tfrac{n}{k}\right\rceil (1+\delta_k)} \| u \|_2.
\]
Substituting these two inequalities into \eqref{lemma1-8} yields \eqref{lemma1-ii}, which completes the proof of case (ii).
\end{proof}

\subsection{Theoretical results for CNHT and CNHTP}
When $x$ is compressible (i.e., not exactly $k$-sparse), let $x_S$ denote the best $k$-term approximation of $x$, where $S=\mathcal{L}_k\left(x\right)$.
Notice that
\begin{equation} \label{linear}
y=A x+\eta = Ax_S + Ax_{\overline{S}} + \eta = Ax_S + \eta',
\end{equation}
where $\eta' = Ax_{\overline{S}} + \eta$.
Under suitable choices of $\lambda$ and $q$, we establish the following theorem which provides an error bound for the iterates generated by the algorithm.
This result further indicates that the algorithms converge to $x_S$ when $\eta'=0$.

\vspace{0.5em}

\begin{theorem} \label{thm-cnt}
Let $y:=Ax+\eta$, where $x$ is $k$-compressible and $S = {\cal L}_k (x)$, and $q \ge k$ be a positive integer.
Let $A \in \mathbb{R}^{m \times n}$ be a matrix satisfying
\begin{equation} \label{thm1-delta}
\delta_{\max \{q,3k\}} < \frac{-1+\sqrt{1+\frac{1}{\varphi}\sqrt{\left\lceil\frac{n}{k}\right\rceil}}}{\sqrt{\left\lceil\frac{n}{k}\right\rceil}}.
\end{equation}
Let $\lambda > 0$ is chosen such that
\begin{equation} \label{thm1-lam}
\left(1+\delta_q\right)\left(1-\frac{\frac{1}{\varphi}-\delta_{3 k}}{1+\sqrt{\left\lceil\tfrac{n}{k}\right\rceil} \delta_{3 k}}\right) < \lambda < \left(1-\delta_q\right)\left(1+\frac{\frac{1}{\varphi}-\delta_{3 k}}{1+\sqrt{\left\lceil\frac{n}{k}\right\rceil} \delta_{3 k}}\right),
\end{equation}
and parameters $\alpha, \gamma$ satisfy $\frac{1}{1+\delta_q} \le \alpha \gamma \le \frac{1}{1-\delta_q}$.
Then the sequence $\{x^{(p)}\}$ generated by the CNHT satisfies
\begin{equation} \label{thm1-convergence}
\|x^{(p+1)} - x_S\|_2 \le \rho_1 \|x^{(p)} - x_S\|_2 + \frac{\lambda \varphi \sqrt{\left\lceil\frac{n}{k}\right\rceil \left(1+\delta_k\right)}}{1-\delta_q}\left\|\eta^{\prime}\right\|_2,
\end{equation}
where $\eta' = Ax_S +\eta$ and 
\begin{equation} \label{thm1-rho}
\rho_1 := \varphi\left(\delta_{3 k}+\Delta\left(\lambda, \delta_q\right)\left(1+\sqrt{\left\lceil\tfrac{n}{k}\right\rceil} \delta_{3 k}\right)\right).
\end{equation}
Here, 
$\Delta (\lambda, \delta_q) $ is given by (\ref{lemma1-10}). 
$\rho_1<1$ is guaranteed under \eqref{thm1-delta} and \eqref{thm1-lam}.
Moreover, the sequence $\{x^{(p)}\}$ converges to $x_S$ when $\eta' = 0$.
\end{theorem}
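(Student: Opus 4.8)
The plan is to run the standard hard-thresholding contraction argument, but with the compressed Newton operator $M(\Omega^p)Z(\Omega^p)$ in place of the usual gradient step, letting Lemma~\ref{lemma1} absorb all of the spectral bookkeeping. Write $v := x^{(p)} - x_S$, which is $2k$-sparse since both $x^{(p)}$ and $x_S$ are $k$-sparse, and abbreviate $M := M(\Omega^p)$, $Z := Z(\Omega^p)$. First I would apply Lemma~\ref{lemma-hard} with $u = x_S$ and $z = u^{(p)}$: since $x^{(p+1)} = \mathcal{H}_k(u^{(p)})$, this gives
\[
\|x^{(p+1)} - x_S\|_2 \le \varphi \, \|(u^{(p)} - x_S)_{\Lambda \cup S^*}\|_2 ,
\]
where $\Lambda = S = \operatorname{supp}(x_S)$ and $S^* = \operatorname{supp}(\mathcal{H}_k(u^{(p)})) = S^{(p+1)}$. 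The task thus reduces to estimating the restriction of $u^{(p)} - x_S$ to $\Lambda \cup S^*$.

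Next, substituting $y = Ax_S + \eta'$ into the update \eqref{alg1-CN} and using $A^\top(y - Ax^{(p)}) = A^\top A(x_S - x^{(p)}) + A^\top \eta'$, I would obtain the exact splitting
\[
u^{(p)} - x_S = (I - \lambda M Z A^\top A)\,v + \lambda M Z A^\top \eta' .
\]
Restricting to $\Lambda \cup S^*$ and applying the triangle inequality separates a \emph{deterministic} contraction term and a \emph{noise} term. For the first, I would invoke Lemma~\ref{lemma1}(i) to bound $\|[(I - \lambda M Z A^\top A)v]_{\Lambda \cup S^*}\|_2$ by $(\delta_{3k} + \Delta(\lambda,\delta_q)(1 + \sqrt{\lceil n/k\rceil}\,\delta_{3k}))\|v\|_2$; for the second, I would drop the restriction (passing to the full $\ell_2$-norm) and apply Lemma~\ref{lemma1}(ii) to get $\lambda \|M Z A^\top \eta'\|_2 \le \lambda \tfrac{\sqrt{\lceil n/k\rceil(1+\delta_k)}}{1-\delta_q}\|\eta'\|_2$. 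Multiplying through by $\varphi$ then yields exactly \eqref{thm1-convergence} with $\rho_1$ as in \eqref{thm1-rho}.

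The one point that needs care is the cardinality bookkeeping in the first term: Lemma~\ref{lemma1}(i) is stated for an index set of size at most $k$, whereas $\Lambda \cup S^*$ can have size up to $2k$. The resolution is that $\operatorname{supp}(v) \subseteq \operatorname{supp}(x^{(p)}) \cup S$, so that $(\Lambda \cup S^*) \cup \operatorname{supp}(v) \subseteq \operatorname{supp}(x^{(p)}) \cup S \cup S^*$ has cardinality at most $3k$; this is precisely what the proof of Lemma~\ref{lemma1}(i) needs in order for Lemma~\ref{lemma-rip}(i) to fire with $t = 3k$ (the remaining steps of that proof pass to the full norm and are independent of the index set), so the constant $\delta_{3k}$, rather than $\delta_{4k}$, is legitimate. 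I expect this to be the main obstacle, in the sense that it is the only place where the specific support structure of the iteration, and not merely Lemma~\ref{lemma1} as a black box, is used.

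It then remains to verify $\rho_1 < 1$. Since $\rho_1 = \varphi(\delta_{3k} + \Delta(\lambda,\delta_q)(1 + \sqrt{\lceil n/k\rceil}\,\delta_{3k}))$, the inequality $\rho_1 < 1$ is equivalent to $\Delta(\lambda,\delta_q) < \varepsilon$, where $\varepsilon := (\tfrac{1}{\varphi} - \delta_{3k})/(1 + \sqrt{\lceil n/k\rceil}\,\delta_{3k})$. Unfolding the definition \eqref{lemma1-10} of $\Delta$, the condition $\max\{|1 - \tfrac{\lambda}{1-\delta_q}|,\, |1 - \tfrac{\lambda}{1+\delta_q}|\} < \varepsilon$ is equivalent to the two-sided bound $(1+\delta_q)(1-\varepsilon) < \lambda < (1-\delta_q)(1+\varepsilon)$, which is precisely \eqref{thm1-lam}. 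This interval is nonempty iff $(1+\delta_q)(1-\varepsilon) < (1-\delta_q)(1+\varepsilon)$, i.e. iff $\delta_q < \varepsilon$; bounding $\delta_q, \delta_{3k} \le \delta_{\max\{q,3k\}} =: \delta$ (and using that $t \mapsto (\tfrac1\varphi - t)/(1+\sqrt{\lceil n/k\rceil}\,t)$ is decreasing) reduces this to $\delta < (\tfrac1\varphi - \delta)/(1+\sqrt{\lceil n/k\rceil}\,\delta)$, which upon clearing denominators becomes the quadratic inequality $\sqrt{\lceil n/k\rceil}\,\delta^2 + 2\delta - \tfrac{1}{\varphi} < 0$, whose positive root is exactly the right-hand side of \eqref{thm1-delta}. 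Hence \eqref{thm1-delta} guarantees a nonempty admissible range for $\lambda$ with $\rho_1 < 1$, so \eqref{thm1-convergence} is a genuine contraction plus a fixed noise offset. Finally, when $\eta' = 0$ the offset vanishes, and iterating $\|x^{(p+1)} - x_S\|_2 \le \rho_1 \|x^{(p)} - x_S\|_2$ gives $\|x^{(p)} - x_S\|_2 \le \rho_1^{\,p}\|x^{(0)} - x_S\|_2 \to 0$, which establishes convergence to $x_S$.
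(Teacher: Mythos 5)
Your proof is correct and follows essentially the same route as the paper's: the same application of Lemma~\ref{lemma-hard}, the same splitting $u^{(p)}-x_S=(I-\lambda M Z A^{\top}A)(x^{(p)}-x_S)+\lambda M Z A^{\top}\eta'$ bounded via Lemma~\ref{lemma1}(i)--(ii), and the same quadratic $\sqrt{\lceil n/k\rceil}\,\delta^2+2\delta-\tfrac{1}{\varphi}<0$ underlying \eqref{thm1-delta}. Your two refinements are in fact improvements in rigor over the paper: you collapse the paper's two-case analysis of $\Delta(\lambda,\delta_q)$ into the equivalent two-sided interval condition, and you explicitly justify invoking Lemma~\ref{lemma1}(i) with the index set $S^{p+1}\cup S$ of cardinality up to $2k$ (the lemma is stated only for $|\Lambda|\le k$) by noting that $\operatorname{supp}(x^{(p)}-x_S)\cup S^{p+1}\cup S$ still has size at most $3k$, a point the paper's proof uses silently.
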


\begin{proof}
Let $x^{(p+1)}$ be generated by CNHT.
By Lemma \ref{lemma-hard} with $u=x_S$ and $z=u^{(p)}$, we have
\begin{equation} \label{thm1-11}
\|x^{(p+1)} - x_{S}\|_2 = \|{\cal H}_k (u^{(p)}) - x_{S}\|_2 
\le \varphi \|(u^{(p)} - x_{S})_{S^{p+1} \cup S}\|_2,
\end{equation}
where $\varphi=(\sqrt{5}+1) / 2,$ $ S=\textrm{supp} (x_S),$ and $ S^{p+1} = \textrm{supp} ({\cal H}_k(u^p)). $
Using \eqref{alg1-CN} and \eqref{linear}, we have
\[
\begin{aligned}
u^{(p)}-x_{S} & =x^{(p)}-x_{S}+\lambda M(\Omega^p) Z(\Omega^p) A^{\top}(y-A x^{(p)}) \\
&= x^{(p)}-x_{S}+\lambda M(\Omega^p) Z(\Omega^p) A^{\top}(A x_{S}+\eta^{\prime}-A x^{(p)}) \\
&= (I-\lambda M(\Omega^p) Z(\Omega^p) A^{\top} A) (x^{(p)}-x_{S})
+\lambda M(\Omega^p) Z(\Omega^p) A^{\top} \eta'.
\end{aligned}
\]
Applying the triangle inequality leads to
\begin{align}
\|(u^{(p)} - x_{S})_{S^{p+1} \cup S}\|_2
\le& \|((I-\lambda M(\Omega^p) Z(\Omega^p) A^{\top} A) (x^{(p)}-x_{S}))_{S^{p+1} \cup S}\|_2 \nonumber \\
&+ \lambda \|(M(\Omega^p) Z(\Omega^p) A^{\top} \eta')_{S^{p+1} \cup S} \|_2. \label{thm1-5}
\end{align}
By Lemma \ref{lemma1}, the first term on the right-hand side of \eqref{thm1-5} can be bounded as
\begin{align}
&\|((I-\lambda M(\Omega^p) Z(\Omega^p) A^{\top} A) (x^{(p)}-x_{S}))_{S^{p+1} \cup S}\|_2 \nonumber \\
&\le \left(\delta_{3k} + \Delta(\lambda, \delta_q)\left(1 + \sqrt{\left\lceil \tfrac{n}{k} \right\rceil}\, \delta_{3k}\right)\right) \|x^{(p)}-x_{S}\|_2, \label{thm1-10}
\end{align}
where $\Delta(\lambda, \delta_q)$ is given as \eqref{lemma1-10}.
Applying Lemma \ref{lemma1} (ii), the second term on the right-hand side of \eqref{thm1-5} can be bounded as
\begin{equation} \label{thm1-12}
\|(M(\Omega^p) Z(\Omega^p) A^{\top} \eta')_{S^{p+1} \cup S} \|_2 \le \|M(\Omega^p) Z(\Omega^p) A^{\top} \eta' \|_2 \le \frac{\sqrt{\left\lceil\tfrac{n}{k}\right\rceil (1+\delta_k)}}{1-\delta_q} \| \eta' \|_2.
\end{equation}

Substituting \eqref{thm1-10} and \eqref{thm1-12} into \eqref{thm1-5}, together with \eqref{thm1-11}, leads to \eqref{thm1-convergence}.
We now find the conditions guaranteeing $\rho_1<1$, i.e.,
\begin{equation} \label{thm1-14}
\varphi \left(\delta_{3k} + \Delta(\lambda, \delta_q)\left(1 + \sqrt{\left\lceil \tfrac{n}{k} \right\rceil}\, \delta_{3k}\right)\right) < 1.
\end{equation}
It is easy to verify that
\[
\Delta\left(\lambda, \delta_q\right)= : \max \left\{\left|1-\frac{\lambda}{1-\delta_q}\right|,\left|1-\frac{\lambda}{1+\delta_q}\right|\right\} =\begin{cases}1-\frac{\lambda}{1+\delta_q}, & \text{when } \lambda \leq 1-\delta_q^2, \\ \frac{\lambda}{1-\delta_q}-1, & \text{when } \lambda>1-\delta_q^2.\end{cases}
\]
Hence, there are only two cases.

\emph { Case 1:} $\lambda \leq 1-\delta_q^2$. In this case,  $\Delta\left(\lambda, \delta_q\right)=1-\frac{\lambda}{1+\delta_q}$, and thus for this case the inequality  \eqref{thm1-14} becomes
\[
\varphi \left(\delta_{3k} + \left(1-\frac{\lambda}{1+\delta_q}\right)\left(1 + \sqrt{\left\lceil \tfrac{n}{k} \right\rceil}\, \delta_{3k}\right)\right) < 1.
\]
This can be guaranteed provided that $ \lambda $ is chosen such that 
\[
\left(1+\delta_q\right)\left(1-\frac{\frac{1}{\varphi}-\delta_{3 k}}{1+\sqrt{\left\lceil\frac{n}{k}\right\rceil} \delta_{3 k}}\right) < \lambda \leq 1-\delta_q^2.
\]
This range for $ \lambda $ exists if
\begin{equation} \label{thm1-15}
\delta_q < \frac{\frac{1}{\varphi}-\delta_{3 k}}{1+\sqrt{\left\lceil\frac{n}{k}\right\rceil} \delta_{3 k}},
\end{equation}
which is ensured under the condition \eqref{thm1-delta}. Indeed, let us consider the quadratic inequality 
\begin{equation} \label{thm1-2}
h_1(\delta_{\max \{q, 3 k\}}) := \sqrt{\left\lceil\tfrac{n}{k}\right\rceil} \delta_{\max \{q, 3 k\}}^2 + 2 \delta_{\max \{q, 3 k\}} - \frac{1}{\varphi} < 0.
\end{equation}
It is straightforward to verify that $h_1(0)<0$, $h_1(1)>0$, and that $h_1$ is strictly increasing on $(0,1)$.  
Thus, there exists a unique real root of $h_1(\delta_{\max \{q,3k\}})=0$ in $(0,1)$, which is given by the right-hand side of \eqref{thm1-delta}. Thus the inequality \eqref{thm1-2} is ensured  under \eqref{thm1-delta}. This inequality can be rewritten equivalently as 
\[
\delta_{\max \{q, 3 k\}}<\frac{\frac{1}{\varphi}-\delta_{\max \{q, 3 k\}}}{1+\sqrt{\left\lceil\frac{n}{k}\right\rceil} \delta_{\max \{q, 3 k\}}},
\]
which implies (\ref{thm1-15}) by noting that $\delta_q \leq \delta_{\max \{q, 3 k\}}$ and $\delta_{3 k} \leq \delta_{\max \{q, 3 k\}}$.

\emph{Case 2.}  $\lambda > 1-\delta_q^2$. In this case,  $\Delta\left(\lambda, \delta_q\right)=\frac{\lambda}{1-\delta_q}-1$, and hence the inequality \eqref{thm1-14} becomes
\[
\varphi \left(\delta_{3k} + \left(\frac{\lambda}{1-\delta_q}-1\right) \left(1 + \sqrt{\left\lceil\tfrac{n}{k}\right\rceil}\, \delta_{3k}\right)\right) < 1.
\]
By a similar proof to the case 1, the above inequality holds if  $\lambda$ satisfies that 
\[
1-\delta_q^2 \leq \lambda < \left(1-\delta_q\right)\left(1+\frac{\frac{1}{\varphi}-\delta_{3 k}}{1+\sqrt{\left\lceil\tfrac{n}{k}\right\rceil} \delta_{3 k}}\right).
\]
The existence of this interval is ensured under the condition \eqref{thm1-delta}.

Moreover, when $\eta'=0$, \eqref{thm1-convergence} reduces to $\|x^{(p+1)}-x_S\|_2 \leq \rho_1\|x^{(p)}-x_S\|_2$ with $\rho_1 < 1$, implying that $x^{(p+1)}$ converges to $x_S$.
This completes the proof.
\end{proof}

The following lemma establishes a key property associated with the pursuit step.

\begin{lemma} \emph{\cite{bouchot2016hard, zhao2020optimal}} \label{lemma-pursuit}
Let $y:=A u+\eta$, where $u \in \mathbb{R}^n$ is a $k$-sparse vector and $\eta \in \mathbb{R}^m$ denotes the noise.
Consider an arbitrary $k$-sparse vector $v$.
The solution to the projection
\[
z^*=\arg \min _z\left\{\|y-A z\|_2^2: \operatorname{supp}(z) \subseteq \operatorname{supp}(v)\right\}
\]
satisfies that 
\[
\left\|z^*-u\right\|_2 \leq \frac{1}{\sqrt{1-\delta_{2 k}^2}} \|u-v\|_2+\frac{\sqrt{1+\delta_k}}{1-\delta_{2 k}}\|\eta\|_2.
\]
\end{lemma}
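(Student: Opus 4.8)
The plan is to use the variational (orthogonal-projection) characterization of the pursuit step, reduce everything to the error vector $d:=z^{*}-u$, and then invoke the RIP through Lemma~\ref{lemma-rip}. First I would set $S:=\operatorname{supp}(v)$, so that $|S|\le k$ and the minimizer is supported on $S$ with $z^{*}_{S}=(A_{S}^{\top}A_{S})^{-1}A_{S}^{\top}y$, the matrix $A_{S}^{\top}A_{S}$ being invertible by the RIP; equivalently $Az^{*}$ is the orthogonal projection of $y$ onto $\operatorname{range}(A_{S})$, so that the normal equations $A_{S}^{\top}(y-Az^{*})=0$ hold. Since $z^{*}$ and $u$ are both $k$-sparse, $d$ is supported on $W:=S\cup\operatorname{supp}(u)$ with $|W|\le 2k$, and $d$ vanishes off $S$ wherever $u$ does, giving $d_{\overline{S}}=-u_{\overline{S}}$ and the orthogonal split $\|d\|_{2}^{2}=\|d_{S}\|_{2}^{2}+\|u_{\overline{S}}\|_{2}^{2}$. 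Because $v$ is supported on $S$ we also have $\|u_{\overline{S}}\|_{2}\le\|u-v\|_{2}$, so it suffices to control $\|d\|_{2}$ by $\|u_{\overline{S}}\|_{2}$ and $\|\eta\|_{2}$.

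Next I would substitute $y=Au+\eta$ and $Au=A_{S}u_{S}+A_{\overline{S}}u_{\overline{S}}$ into the normal equations to obtain $A_{S}^{\top}A_{S}d_{S}=A_{S}^{\top}A_{\overline{S}}u_{\overline{S}}+A_{S}^{\top}\eta$, hence $d_{S}=\phi+\psi$ with $\phi:=(A_{S}^{\top}A_{S})^{-1}A_{S}^{\top}A_{\overline{S}}u_{\overline{S}}$ and $\psi:=(A_{S}^{\top}A_{S})^{-1}A_{S}^{\top}\eta$. Writing $d=d^{(1)}+d^{(2)}$, where $d^{(1)}$ equals $\phi$ on $S$ and $-u_{\overline{S}}$ off $S$ while $d^{(2)}$ equals $\psi$ on $S$ and $0$ off $S$, the triangle inequality gives $\|d\|_{2}\le\sqrt{\|\phi\|_{2}^{2}+\|u_{\overline{S}}\|_{2}^{2}}+\|\psi\|_{2}$. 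The noise term is routine: the eigenvalues of $(A_{S}^{\top}A_{S})^{-1}$ lie in $[1/(1+\delta_{k}),1/(1-\delta_{k})]$ and $\|A_{S}^{\top}\eta\|_{2}\le\sqrt{1+\delta_{k}}\,\|\eta\|_{2}$ by Lemma~\ref{lemma-rip}(ii), so $\|\psi\|_{2}\le\frac{\sqrt{1+\delta_{k}}}{1-\delta_{k}}\|\eta\|_{2}\le\frac{\sqrt{1+\delta_{k}}}{1-\delta_{2k}}\|\eta\|_{2}$, which is exactly the second term of the asserted bound.

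The crux is the signal term, for which I must show
\[
\sqrt{\|\phi\|_{2}^{2}+\|u_{\overline{S}}\|_{2}^{2}}\le\frac{1}{\sqrt{1-\delta_{2k}^{2}}}\,\|u_{\overline{S}}\|_{2},
\]
equivalently the operator-norm estimate $\|(A_{S}^{\top}A_{S})^{-1}A_{S}^{\top}A_{\overline{S}}\|_{2\to2}\le\delta_{2k}/\sqrt{1-\delta_{2k}^{2}}$ on the disjoint blocks inside $W$. Writing $G:=A_{W}^{\top}A_{W}=I+E$ with $\|E\|_{2}\le\delta_{2k}$ and partitioning $G$ according to $S$ and $R:=W\setminus S$, this quantity equals $G_{SS}^{-1}G_{SR}=(I+E_{SS})^{-1}E_{SR}$, and the required sharp bound follows from the two-sided semidefinite constraint $-\delta_{2k}I\preceq E\preceq\delta_{2k}I$ via an extremal Schur-complement argument (the bound is already attained in the $2\times2$ case, where $\max|b|/(1+a)=\delta/\sqrt{1-\delta^{2}}$ subject to the spectral constraint). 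Granting this, $\|\phi\|_{2}^{2}\le\frac{\delta_{2k}^{2}}{1-\delta_{2k}^{2}}\|u_{\overline{S}}\|_{2}^{2}$ yields $\|\phi\|_{2}^{2}+\|u_{\overline{S}}\|_{2}^{2}\le\frac{1}{1-\delta_{2k}^{2}}\|u_{\overline{S}}\|_{2}^{2}$, and combining with the noise estimate and $\|u_{\overline{S}}\|_{2}\le\|u-v\|_{2}$ completes the proof.

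The main obstacle is precisely this sharp spectral estimate. Bounding the two factors separately, $\|(A_{S}^{\top}A_{S})^{-1}\|_{2}\le 1/(1-\delta_{2k})$ and $\|A_{S}^{\top}A_{\overline{S}}\|_{2}\le\delta_{2k}$, or using only the orthogonality $Ad\perp\operatorname{range}(A_{S})$ in the noiseless case, produces a strictly weaker constant of order $1/\sqrt{1-\delta_{2k}}$; recovering the optimal $1/\sqrt{1-\delta_{2k}^{2}}$ forces one to exploit the lower and upper RIP bounds on $E$ simultaneously rather than factorwise. Everything else—the projection characterization, the error split, and the noise estimate—is routine.
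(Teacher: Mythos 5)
Your route is genuinely different from the one behind this lemma: the paper gives no proof at all, quoting the result from \cite{bouchot2016hard, zhao2020optimal}, and the argument in those references never inverts $A_S^\top A_S$. There, one sets $d := z^{*}-u$, $S := \operatorname{supp}(v)$, notes the normal equations $(A^{\top}(y-Az^{*}))_{S}=0$, and computes
\[
\|d_S\|_2^2 \;=\; \langle d_S,\,(I-A^{\top}A)d\rangle + \langle d_S,\,A^{\top}\eta\rangle \;\le\; \delta_{2k}\|d_S\|_2\|d\|_2 + \sqrt{1+\delta_k}\,\|\eta\|_2\,\|d_S\|_2
\]
by Lemma~\ref{lemma-rip}, so that $\|d_S\|_2 \le \delta_{2k}\|d\|_2 + \sqrt{1+\delta_k}\,\|\eta\|_2$; inserting this into the Pythagorean identity $\|d\|_2^2=\|d_S\|_2^2+\|u_{\overline{S}}\|_2^2$ and solving the resulting quadratic inequality in $\|d\|_2$ gives exactly the stated constants. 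You instead use the explicit least-squares formula for $z^{*}$, split the error into a signal part and a noise part, and reduce the whole lemma to the sharp operator-norm bound $\|(A_S^{\top}A_S)^{-1}A_S^{\top}A_R\|_2 \le \delta_{2k}/\sqrt{1-\delta_{2k}^2}$ on disjoint index sets with $|S\cup R|\le 2k$. Your setup, the identity $d_{\overline{S}}=-u_{\overline{S}}$, the noise estimate, and the reduction $\|u_{\overline{S}}\|_2\le\|u-v\|_2$ are all correct.

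The gap is exactly where you flagged it: the crux spectral estimate is asserted, not proved. Checking the $2\times 2$ case shows the bound is tight, but it does not establish the general inequality, and the ``extremal Schur-complement argument'' is never actually given. The estimate is true, however, and your outline can be closed in two lines. Write $E := A_W^{\top}A_W - I$ with $W=S\cup R$, so that $\|E\|_2\le\delta:=\delta_{2k}$ and hence $E^2 \preceq \delta^2 I$. Taking the $S\times S$ block of this inequality and using $(E^2)_{SS}=E_{SS}^2+E_{SR}E_{SR}^{\top}$ gives $E_{SR}E_{SR}^{\top} \preceq \delta^2 I - E_{SS}^2$. Conjugating by $(I+E_{SS})^{-1}$, which commutes with $E_{SS}$, and diagonalizing $E_{SS}$ (whose eigenvalues lie in $[-\delta,\delta]$ by interlacing) yields
\[
\|(I+E_{SS})^{-1}E_{SR}\|_2^2 \;\le\; \max_{\mu\in[-\delta,\delta]}\frac{\delta^2-\mu^2}{(1+\mu)^2} \;=\; \frac{\delta^2}{1-\delta^2},
\]
the maximum being attained at $\mu=-\delta^2$ --- precisely your $2\times 2$ extremal computation, which is thus the heart of the proof rather than an illustration. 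With this inserted, your argument is complete (and your intermediate noise constant $\sqrt{1+\delta_k}/(1-\delta_k)$ is in fact slightly sharper than the stated one). Comparing the two: the classical proof is shorter and avoids matrix inverses entirely; yours isolates a reusable spectral bound that explains where the constant $1/\sqrt{1-\delta_{2k}^2}$ comes from.
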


\begin{theorem} \label{thm-cntp}
Let $y:=Ax+\eta$, where $x$ is $k$-compressible and $S = {\cal L}_k (x)$, and $q \ge k$ be a positive integer.
Let $A \in \mathbb{R}^{m \times n}$ be a matrix satisfying
\begin{equation} \label{thm-CNTP-q}
\delta_{\max\{q,3k\}} < \frac{-(2 \varphi+1) + \sqrt{(2 \varphi+1)^2+4 \varphi \sqrt{\left\lceil\frac{n}{k}\right\rceil}}}{2 \varphi \sqrt{\left\lceil\frac{n}{k}\right\rceil}}.
\end{equation}
If $\lambda$ is chosen such that
\begin{equation} \label{thm-CNTP-lam}
\left(1+\delta_q\right)\left(1-\frac{\frac{1}{\varphi}\sqrt{1-\delta_{2 k}^2}-\delta_{3 k}}{1 + \sqrt{\left\lceil \tfrac{n}{k} \right\rceil} \delta_{3k}}\right)<\lambda < \left(1-\delta_q\right)\left(1+\frac{\frac{1}{\varphi}\sqrt{1-\delta_{2 k}^2}-\delta_{3 k}}{1+\sqrt{\left\lceil\frac{n}{k}\right\rceil} \delta_{3 k}}\right),
\end{equation}
and parameters $\alpha, \gamma$ satisfy $\frac{1}{1+\delta_q} \le \alpha \gamma \le \frac{1}{1-\delta_q}$.
Then the sequence $\left\{x^{(p)}\right\}$ generated by the CNHTP satisfies that
\begin{equation} \label{thm2-convergence}
\|x^{(p+1)}-x_{S}\|_2 \leq \rho_2 \|x^{(p)}-x_{S}\|_2 + \left( \frac{\lambda \varphi}{1-\delta_q} \sqrt{\frac{\left\lceil\frac{n}{k}\right\rceil\left(1+\delta_k\right)}{1-\delta_{2k}^2}}
+\frac{\sqrt{1+\delta_k}}{1-\delta_{2 k}}\right)\left\|\eta'\right\|_2,
\end{equation}
where $\eta' = Ax_S +\eta$,
\begin{equation} \label{thm2-rho}
\rho_2 := \frac{\varphi }{\sqrt{1-\delta_{2 k}^2}}\left(\delta_{3k} + \Delta(\lambda, \delta_q)\left(1 + \sqrt{\left\lceil \tfrac{n}{k} \right\rceil}\, \delta_{3k}\right)\right),
\end{equation}
and  
$\Delta (\lambda, \delta_q) $ is given by (\ref{lemma1-10}).
$\rho_2 < 1$ is guaranteed under \eqref{thm-CNTP-q} and \eqref{thm-CNTP-lam}.
Moreover, the sequence $\{x^{(p)}\}$ converges to $x_S$ when $\eta' = 0$.
\end{theorem}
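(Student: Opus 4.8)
The plan is to reduce the analysis to that of Theorem \ref{thm-cnt} and then absorb the extra projection via Lemma \ref{lemma-pursuit}. The CNHTP iteration coincides with CNHT up to the hard-thresholding step: after forming $\mathcal{H}_k(u^{(p)})$ with support $S^{(p+1)}$, it replaces this vector by the least-squares solution $x^{(p+1)}$ supported on $S^{(p+1)}$. Thus the first portion of the proof is \emph{verbatim} the argument for Theorem \ref{thm-cnt}. Starting from $u^{(p)} = x^{(p)} + \lambda M(\Omega^p)Z(\Omega^p)A^{\top}(y-Ax^{(p)})$ and the splitting $y = Ax_S + \eta'$, I would apply Lemma \ref{lemma-hard}, Lemma \ref{lemma1}(i) and Lemma \ref{lemma1}(ii) exactly as before to obtain
\[
\|\mathcal{H}_k(u^{(p)}) - x_S\|_2 \le \rho_1\|x^{(p)}-x_S\|_2 + \frac{\lambda\varphi\sqrt{\left\lceil\frac{n}{k}\right\rceil(1+\delta_k)}}{1-\delta_q}\|\eta'\|_2,
\]
with $\rho_1$ as in \eqref{thm1-rho}.

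The second step inserts the pursuit projection. I would invoke Lemma \ref{lemma-pursuit} with target $u = x_S$ (which is $k$-sparse), with $v = \mathcal{H}_k(u^{(p)})$ (also $k$-sparse), with $z^* = x^{(p+1)}$, and with noise $\eta'$, yielding
\[
\|x^{(p+1)} - x_S\|_2 \le \frac{1}{\sqrt{1-\delta_{2k}^2}}\,\|\mathcal{H}_k(u^{(p)}) - x_S\|_2 + \frac{\sqrt{1+\delta_k}}{1-\delta_{2k}}\|\eta'\|_2.
\]
Substituting the bound from the first step gives exactly \eqref{thm2-convergence}: the contraction factor becomes $\rho_2 = \rho_1/\sqrt{1-\delta_{2k}^2}$, matching \eqref{thm2-rho}, and the noise coefficient is recovered by collecting the two $\|\eta'\|_2$ terms, since $\frac{\lambda\varphi\sqrt{\lceil n/k\rceil(1+\delta_k)}}{(1-\delta_q)\sqrt{1-\delta_{2k}^2}} = \frac{\lambda\varphi}{1-\delta_q}\sqrt{\frac{\lceil n/k\rceil(1+\delta_k)}{1-\delta_{2k}^2}}$.

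It remains to guarantee $\rho_2<1$, i.e. $\delta_{3k} + \Delta(\lambda,\delta_q)\big(1 + \sqrt{\lceil n/k\rceil}\,\delta_{3k}\big) < \frac{\sqrt{1-\delta_{2k}^2}}{\varphi}$. This is structurally identical to \eqref{thm1-14}, but with the threshold $1/\varphi$ replaced by $\sqrt{1-\delta_{2k}^2}/\varphi$. I would therefore run the same Case~1 ($\lambda\le 1-\delta_q^2$, so $\Delta = 1-\frac{\lambda}{1+\delta_q}$) and Case~2 ($\lambda > 1-\delta_q^2$, so $\Delta = \frac{\lambda}{1-\delta_q}-1$) split; in each case the inequality holds for $\lambda$ in the corresponding half of the interval \eqref{thm-CNTP-lam}, and the two halves glue together at $\lambda = 1-\delta_q^2$ exactly as in Theorem \ref{thm-cnt}. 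A short computation shows the interval $\big((1+\delta_q)(1-\beta'),\,(1-\delta_q)(1+\beta')\big)$, with $\beta' := \big(\tfrac{1}{\varphi}\sqrt{1-\delta_{2k}^2}-\delta_{3k}\big)/(1+\sqrt{\lceil n/k\rceil}\,\delta_{3k})$, is nonempty precisely when $\delta_q < \beta'$.

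The main obstacle—the only genuinely new calculation relative to Theorem \ref{thm-cnt}—is converting this existence condition, which mixes $\delta_q$, $\delta_{3k}$ and the square-root term $\sqrt{1-\delta_{2k}^2}$, into the closed form \eqref{thm-CNTP-q}. The key is to set $\delta := \delta_{\max\{q,3k\}}$, use monotonicity $\delta_q,\delta_{2k},\delta_{3k}\le\delta$, and apply the elementary bound $\sqrt{1-\delta_{2k}^2}\ge 1-\delta_{2k}\ge 1-\delta$. The condition $\delta_q(1+\sqrt{\lceil n/k\rceil}\,\delta_{3k}) + \delta_{3k} < \frac{1}{\varphi}\sqrt{1-\delta_{2k}^2}$ is then implied by $2\delta + \sqrt{\lceil n/k\rceil}\,\delta^2 < \frac{1}{\varphi}(1-\delta)$, i.e.
\[
\varphi\sqrt{\left\lceil\tfrac{n}{k}\right\rceil}\,\delta^2 + (2\varphi+1)\delta - 1 < 0.
\]
As in Theorem \ref{thm-cnt}, the left-hand side is strictly increasing on $(0,1)$, negative at $\delta=0$ and positive at $\delta=1$, so it has a unique root there, which is the right-hand side of \eqref{thm-CNTP-q}. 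Finally, setting $\eta'=0$ collapses \eqref{thm2-convergence} to $\|x^{(p+1)}-x_S\|_2\le\rho_2\|x^{(p)}-x_S\|_2$ with $\rho_2<1$, giving convergence to $x_S$.
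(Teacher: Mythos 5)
Your proposal is correct and follows essentially the same route as the paper's proof: compose the pursuit-step bound from Lemma \ref{lemma-pursuit} (with $u=x_S$, $v=\mathcal{H}_k(u^{(p)})$, noise $\eta'$) with the CNHT estimate from Theorem \ref{thm-cnt}, then run the same two-case analysis in $\lambda$ and reduce the existence condition to the quadratic $\varphi\sqrt{\lceil n/k\rceil}\,\delta^2+(2\varphi+1)\delta-1<0$ via the same elementary bound $1-\delta\le\sqrt{1-\delta^{2}}$. The only difference is the order of composition (you bound $\|\mathcal{H}_k(u^{(p)})-x_S\|_2$ first and then project, while the paper applies the projection lemma first), which is immaterial.
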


\begin{proof}
Let $x^{(p+1)}$ be generated by CNHTP.
Applying Lemma \ref{lemma-pursuit} with $u = x_{S}$, $v = {\cal H}_k (u^{(p)})$, and $\eta = \eta'$, we have
\[
\|x^{(p+1)} - x_{S}\|_2 \le \frac{1}{\sqrt{1-\delta_{2k}^2}} \|{\cal H}_k (u^{(p)}) - x_{S}\|_2 + \frac{\sqrt{1+\delta_k}}{1-\delta_{2 k}}\|\eta'\|_2.
\]
From Theorem \ref{thm-cnt}, it follows that
\[
\|{\cal H}_k (u^{(p)}) - x_{S}\|_2 \le \rho_1 \|x^{(p)} - x_S\|_2 + \frac{\lambda \varphi \sqrt{\left\lceil\frac{n}{k}\right\rceil \left(1+\delta_k\right)}}{1-\delta_q}\left\|\eta'\right\|_2,
\]
where $\rho_1$ is given by \eqref{thm1-rho}.
Combining these two inequalities leads to \eqref{thm2-convergence}.
We now verify that $\rho_2 < 1$, where $\rho_2$ is given by \eqref{thm2-rho}, under the assumption of the theorem.

\emph{Case 1.}   $\lambda \leq 1-\delta_q^2$.  In this case,  $\Delta\left(\lambda, \delta_q\right)=1-\frac{\lambda}{1+\delta_q}$.
Thus, $\rho_2<1$ is written as
\[
\frac{\varphi}{\sqrt{1-\delta_{2 k}^2}} \left(\delta_{3k} + \left(1-\frac{\lambda}{1+\delta_q}\right)\left(1 + \sqrt{\left\lceil \tfrac{n}{k} \right\rceil}\, \delta_{3k}\right)\right) <1,
\]
which can be ensured if $ \lambda $ is chosen such that 
\[
\left(1+\delta_q\right)\left(1-\frac{\frac{1}{\varphi}\sqrt{1-\delta_{2 k}^2}-\delta_{3 k}}{1 + \sqrt{\left\lceil \tfrac{n}{k} \right\rceil} \delta_{3k}}\right) < \lambda \leq 1-\delta_q^2.
\]
The existence of this range is guaranteed by 
\begin{equation} \label{thm2-2}
\delta_q < \frac{\frac{1}{\varphi}\sqrt{1-\delta_{2 k}^2}-\delta_{3 k}}{1 + \sqrt{\left\lceil \tfrac{n}{k} \right\rceil} \delta_{3k}},
\end{equation}
which holds under \eqref{thm-CNTP-q}. To verify this, let us consider the following quadratic inequality in $\delta_{\max \{q, 3 k\}}: $
\begin{equation} \label{Q666D}
h_2(\delta_{\max \{q, 3 k\}}) := \varphi \sqrt{\left\lceil\tfrac{n}{k}\right\rceil} \delta_{\max \{q, 3 k\}}^2+(2 \varphi+1) \delta_{\max \{q, 3 k\}}-1<0.
\end{equation}
It is straightforward to verify that the equation $h_2 (\delta_{\max \{q, 3 k\}})=0$ has a unique real root in the interval $(0,1)$, and that $h_2$ is strictly increasing over this interval. This real root is given by the right-hand side of \eqref{thm-CNTP-q}. Thus the condition \eqref{thm-CNTP-q} guarantees the inequality \eqref{Q666D}, which can be written equivalently as
\begin{equation} \label{thm2-1}
\delta_{\max \{q, 3 k\}}<\frac{\frac{1}{\varphi}\left(1-\delta_{\max \{q, 3 k\}}\right)-\delta_{\max \{q, 3 k\}}}{1+\sqrt{\left\lceil\frac{n}{k}\right\rceil} \delta_{\max \{q, 3 k\}}}.
\end{equation}
This further implies (\ref{thm2-2}) by simply noting that  $\delta_{2k} \le \delta_{3k}$, $\delta_q \le \delta_{\max \{q, 3 k\}}$, $\delta_{3k} \le \delta_{\max \{q, 3 k\}}$, and $1-\delta_{\max \{q, 3 k\}} < \sqrt{1-\delta_{\max \{q, 3 k\}}^2}$ since $\delta_{\max \{q, 3 k\}} \in(0,1)$.

\emph{Case 2.} $\lambda > 1-\delta_q^2$. In this case, $\Delta\left(\lambda, \delta_q\right)=\frac{\lambda}{1-\delta_q}-1$, and then $\rho_2<1$ is written as
\[
\frac{\varphi}{\sqrt{1-\delta_{2 k}^2}} \left(\delta_{3k} + \left(\frac{\lambda}{1-\delta_q}-1\right)\left(1 + \sqrt{\left\lceil \tfrac{n}{k} \right\rceil}\, \delta_{3k}\right)\right) < 1.
\]
By a similar analysis to case 1, the above inequality holds if $\lambda$ is chosen such that  
\[
1-\delta_q^2<\lambda<\left(1-\delta_q\right)\left(1+\frac{\frac{1}{\varphi}\sqrt{1-\delta_{2 k}^2}-\delta_{3 k}}{1+\sqrt{\left\lceil\frac{n}{k}\right\rceil} \delta_{3 k}}\right).
\]
The existence of this interval is ensured  \eqref{thm-CNTP-q}.

Moreover, when $\eta'=0$, \eqref{thm2-convergence} reduces to $\|x^{(p+1)}-x_S\|_2 \leq \rho_2 \|x^{(p)}-x_S\|_2$ with $\rho_2 < 1$, implying that $x^{(p+1)}$ converges to $x_S$.
The proof is complete.
\end{proof}

\subsection{Theoretical results for CNOT and CNOTP}
We now establish the main results for the CNOT and CNOTP algorithms.
The following lemma concerning the relaxed optimal $k$-thresholding step follows directly from combining inequalities (27) and (28) in \cite{meng2022partial}.
\vspace{0.5em}

\begin{lemma} \cite{zhao2020optimal, meng2022partial}
Let $y:=A x+\eta$, which can also be expressed as $y = Ax_S + \eta'$ with $\eta' = Ax_{\overline{S}} + \eta$.
Let $u^{(p)}$ and $w^{(p)}$ be defined as in Algorithm \ref{CNOTP}. Then
\begin{equation} \label{lemma7-1}
\|{\cal H}_k (w^{(p)} \otimes u^{(p)}) - x_{S}\|_2 \le 3 \varphi \sqrt{\frac{1+\delta_k}{1-\delta_{2 k}}} \|u^{(p)}-x_{S}\|_2 + \frac{2 \varphi}{\sqrt{1-\delta_{2 k}}}\|\eta'\|_2,
\end{equation}
where $\varphi = \frac{\sqrt{5}+1}{2}$.
\end{lemma}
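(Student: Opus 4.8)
The statement is quoted from \cite{zhao2020optimal, meng2022partial} and, as the surrounding text notes, is obtained by splicing together inequalities (27) and (28) of \cite{meng2022partial}. The crucial structural observation is that this bound is \emph{input-agnostic}: the relaxed optimal-thresholding step in Algorithm \ref{CNOTP} treats $u^{(p)}$ as a black-box vector, so the error $\|\mathcal{H}_k(w^{(p)}\otimes u^{(p)}) - x_S\|_2$ depends on $u^{(p)}$ only through the distance $\|u^{(p)}-x_S\|_2$, never through the compressed-Newton construction of $u^{(p)}$. Hence the plan is simply to reconstruct the two cited inequalities in the present notation and chain them, with no modification needed from the direction used to produce $u^{(p)}$. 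Throughout write $z := w^{(p)}\otimes u^{(p)}$, let $S^{\#} := \operatorname{supp}(\mathcal{H}_k(z))$, and set $T := S\cup S^{\#}$, so that $|T|\le 2k$.

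First I would reconstruct the inequality playing the role of (27), which isolates the thresholding/RIP contribution. Applying the hard-thresholding near-optimality bound of Lemma \ref{lemma-hard} with $u=x_S$ gives $\|\mathcal{H}_k(z)-x_S\|_2 \le \varphi\,\|(z-x_S)_T\|_2$, and since $(z-x_S)_T$ is supported on $T$ with $|T|\le 2k$, the lower RIP bound of Definition \ref{def-ric} yields $\|(z-x_S)_T\|_2 \le \tfrac{1}{\sqrt{1-\delta_{2k}}}\,\|A(z-x_S)_T\|_2$. Writing $(z-x_S)_T = (z-x_S) - z_{\overline{T}}$ and using $Az - y = A(z-x_S) - \eta'$ (from $y = Ax_S + \eta'$), this unfolds into the residual $\|y-Az\|_2$, the noise $\|\eta'\|_2$, and a tail term $\|Az_{\overline{T}}\|_2$. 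Next I would reconstruct inequality (28), the residual estimate from optimality: the indicator $\mathbf{1}_S$ of $S$ is feasible for the relaxed program defining $w^{(p)}$ (it satisfies $\mathbf{0}\le \mathbf{1}_S\le \mathbf{e}$ and $\mathbf{e}^\top \mathbf{1}_S = k$) and $\mathbf{1}_S\otimes u^{(p)} = (u^{(p)})_S$, so optimality of $w^{(p)}$ gives $\|y-Az\|_2 \le \|y - A(u^{(p)})_S\|_2 \le \sqrt{1+\delta_k}\,\|u^{(p)}-x_S\|_2 + \|\eta'\|_2$, where the last step uses the $k$-sparsity of $x_S-(u^{(p)})_S$ and the upper RIP bound. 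Substituting this residual bound into the chain from the first step and collecting the triangle-inequality contributions produces the claimed coefficients $3\varphi\sqrt{(1+\delta_k)/(1-\delta_{2k})}$ and $2\varphi/\sqrt{1-\delta_{2k}}$.

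The hard part will be controlling the tail term $\|Az_{\overline{T}}\|_2$: because the convex relaxation generally returns a non-binary $w^{(p)}$, the vector $z=w^{(p)}\otimes u^{(p)}$ need not be $k$-sparse and its support can be as large as $n$, so RIP cannot be applied to $z$ as a whole and the naive decomposition leaves an a priori uncontrolled term. Showing that $\|Az_{\overline{T}}\|_2$ is dominated by the same residual and error quantities already present — so that it only inflates the coefficients (supplying the extra $\varphi\sqrt{(1+\delta_k)/(1-\delta_{2k})}$ that distinguishes the coefficient $3\varphi$ from a naive $2\varphi$) rather than spoiling the estimate — is precisely the technical content carried out in \cite{zhao2020optimal, meng2022partial}. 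Since that argument is insensitive to how $u^{(p)}$ was generated, the final step of the proof is to invoke it directly and read off \eqref{lemma7-1}.
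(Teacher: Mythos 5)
Your proposal is correct and takes essentially the same route as the paper: the paper gives no independent derivation, stating only that the lemma ``follows directly from combining inequalities (27) and (28) in \cite{meng2022partial}'', which is precisely the citation you invoke, and your ``input-agnostic'' observation is the right justification for transferring a bound proved for a partial-gradient-generated $u^{(p)}$ to the compressed-Newton-generated one here. Your reconstruction of the feasibility/optimality step and the thresholding-plus-RIP skeleton is sound, and the one genuinely technical piece --- controlling the tail $\|Az_{\overline{T}}\|_2$ and the exact bookkeeping that yields the constants $3\varphi$ and $2\varphi$ --- is deferred to \cite{zhao2020optimal, meng2022partial}, exactly as the paper defers it.
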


\begin{theorem}\label{thm-cnot}
Let $y:=Ax+\eta$, where $x$ is $k$-compressible and $S = {\cal L}_k (x)$, and $q \ge k$ be a positive integer.
Let $A \in \mathbb{R}^{m \times n}$ be a matrix satisfying
\begin{equation} \label{thm3-delta}
\delta_{\max \{q, 3 k\}} < \beta^*,
\end{equation}
where $\beta^*$ is the unique real root of
\[
9 \varphi^2\left[\left\lceil\tfrac{n}{k}\right\rceil \beta^5+\left(\left\lceil\tfrac{n}{k}\right\rceil+4 \sqrt{\left\lceil\tfrac{n}{k}\right\rceil}\right) \beta^4+4\left(\sqrt{\left\lceil\tfrac{n}{k}\right\rceil}+1\right) \beta^3+4 \beta^2\right]+\beta-1=0 .
\]
If $\lambda$ is chosen such that
\begin{equation} \label{thm3-lam}
\left(1+\delta_q\right)\left(1-\frac{\frac{1}{3 \varphi} \sqrt{\frac{1-\delta_{2k}}{1+\delta_{k}}} -\delta_{3 k}}{1+\sqrt{\left\lceil\frac{n}{k}\right\rceil} \delta_{3 k}}\right) < \lambda < \left(1-\delta_q\right)\left(1+\frac{\frac{1}{3 \varphi} \sqrt{\frac{1-\delta_{2k}}{1+\delta_{k}}} -\delta_{3 k}}{1+\sqrt{\left\lceil\frac{n}{k}\right\rceil} \delta_{3 k}}\right),
\end{equation}
and parameters $\alpha, \gamma$ satisfy $\frac{1}{1+\delta_q} \le \alpha \gamma \le \frac{1}{1-\delta_q}$.
Then the sequence $\{x^{(p)}\}$ generated by the CNOT satisfies
\begin{equation} \label{thm3-convergence}
\|x^{(p+1)}-x_{S}\|_2 \leq \rho_3 \|x^{(p)}-x_{S}\|_2 
+ \tau_3 \|\eta'\|_2,
\end{equation}
where $\eta' = Ax_S +\eta$,
\begin{equation} \label{thm3-rho}
\rho_3 := 3 \varphi \sqrt{\frac{1+\delta_k}{1-\delta_{2 k}}} \left(\delta_{3 k}+\Delta\left(\lambda, \delta_q\right)\left(1+\sqrt{\left\lceil\tfrac{n}{k}\right\rceil} \delta_{3 k}\right)\right),
\end{equation}
\begin{equation} \label{thm3-tau}
\tau_3 := \frac{3 \lambda \varphi (1+\delta_k) \sqrt{\left\lceil\tfrac{n}{k}\right\rceil}}{(1-\delta_q)\sqrt{1-\delta_{2 k}}} + \frac{2\varphi}{\sqrt{1-\delta_{2 k}}}.
\end{equation} Here, 
$\Delta (\lambda, \delta_q) $ is given by (\ref{lemma1-10}).
$\rho_3<1$ is guaranteed under \eqref{thm3-delta} and \eqref{thm3-lam}.
Moreover, the sequence $\{x^{(p)}\}$ converges to $x_S$ when $\eta' = 0$.
\end{theorem}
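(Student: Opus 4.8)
The plan is to follow the two-theorem template already used for CNHT and CNHTP, with the hard-thresholding estimate of Lemma~\ref{lemma-hard} replaced by the relaxed optimal-thresholding estimate \eqref{lemma7-1}. Since the CNOT iterate is $x^{(p+1)}=\mathcal{H}_k(w^{(p)}\otimes u^{(p)})$, inequality \eqref{lemma7-1} immediately yields $\|x^{(p+1)}-x_S\|_2\le 3\varphi\sqrt{(1+\delta_k)/(1-\delta_{2k})}\,\|u^{(p)}-x_S\|_2+(2\varphi/\sqrt{1-\delta_{2k}})\|\eta'\|_2$, so the whole theorem reduces to bounding $\|u^{(p)}-x_S\|_2$ --- exactly the quantity already dissected in the proof of Theorem~\ref{thm-cnt}.

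First I would expand $u^{(p)}-x_S$ through \eqref{alg1-CN} together with the splitting $y=Ax_S+\eta'$ from \eqref{linear}, obtaining the affine identity $u^{(p)}-x_S=(I-\lambda M(\Omega^p)Z(\Omega^p)A^\top A)(x^{(p)}-x_S)+\lambda M(\Omega^p)Z(\Omega^p)A^\top\eta'$. The vector $x^{(p)}-x_S$ is $2k$-sparse, so Lemma~\ref{lemma1}(i) controls the first term by $(\delta_{3k}+\Delta(\lambda,\delta_q)(1+\sqrt{\lceil n/k\rceil}\,\delta_{3k}))\|x^{(p)}-x_S\|_2$ (controlling the full, non-sparse vector here uses the same $\lceil n/k\rceil$-block partition as in the proof of that lemma), while Lemma~\ref{lemma1}(ii) controls the second by $\lambda\sqrt{\lceil n/k\rceil(1+\delta_k)}/(1-\delta_q)\,\|\eta'\|_2$. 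Multiplying both bounds by $3\varphi\sqrt{(1+\delta_k)/(1-\delta_{2k})}$ and folding in the residual term $(2\varphi/\sqrt{1-\delta_{2k}})\|\eta'\|_2$ reproduces the recursion \eqref{thm3-convergence} with $\rho_3$ as in \eqref{thm3-rho} and $\tau_3$ as in \eqref{thm3-tau}; the merging of the two $\|\eta'\|_2$ contributions into the single coefficient $\tau_3$ is routine but must be tracked carefully.

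With the recursion established, the remaining task is to certify $\rho_3<1$, which after dividing is equivalent to $\Delta(\lambda,\delta_q)<\theta$ with $\theta:=(\tfrac{1}{3\varphi}\sqrt{(1-\delta_{2k})/(1+\delta_k)}-\delta_{3k})/(1+\sqrt{\lceil n/k\rceil}\,\delta_{3k})$. Using the explicit form \eqref{lemma1-10} of $\Delta$, I would split into the cases $\lambda\le 1-\delta_q^2$ (where $\Delta=1-\lambda/(1+\delta_q)$) and $\lambda>1-\delta_q^2$ (where $\Delta=\lambda/(1-\delta_q)-1$), exactly as in Theorems~\ref{thm-cnt} and \ref{thm-cntp}. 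In each case $\Delta<\theta$ becomes a one-sided bound on $\lambda$, and the union of the two admissible sub-intervals is precisely the range \eqref{thm3-lam}; this range is nonempty if and only if $\delta_q<\theta$.

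The crux, and what I expect to be the main obstacle, is to deduce $\delta_q<\theta$ from the scalar hypothesis \eqref{thm3-delta}. Writing $\beta:=\delta_{\max\{q,3k\}}$ and invoking the monotonicity $\delta_q,\delta_k,\delta_{2k},\delta_{3k}\le\beta$, I would lower-bound the numerator and upper-bound the denominator of $\theta$ to reduce $\delta_q<\theta$ to the one-variable inequality
\[
2\beta+\sqrt{\lceil n/k\rceil}\,\beta^2<\tfrac{1}{3\varphi}\sqrt{(1-\beta)/(1+\beta)}.
\]
Since both sides are positive on $(0,1)$, squaring is legitimate, and clearing $1+\beta$ turns this into
\[
9\varphi^2\big[\lceil n/k\rceil\,\beta^5+(\lceil n/k\rceil+4\sqrt{\lceil n/k\rceil})\beta^4+4(\sqrt{\lceil n/k\rceil}+1)\beta^3+4\beta^2\big]+\beta-1<0,
\]
whose boundary equation defines $\beta^*$ in \eqref{thm3-delta}. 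I would then check that this quintic equals $-1$ at $\beta=0$, is positive at $\beta=1$, and is strictly increasing on $(0,1)$ (all bracketed coefficients are positive), so it has a unique root $\beta^*\in(0,1)$ and the inequality holds exactly for $\beta<\beta^*$. The delicate points are justifying the squaring step and the sign/monotonicity analysis of the quintic; the rest is bookkeeping. Setting $\eta'=0$ finally collapses \eqref{thm3-convergence} to the contraction $\|x^{(p+1)}-x_S\|_2\le\rho_3\|x^{(p)}-x_S\|_2$ with $\rho_3<1$, yielding convergence to $x_S$.
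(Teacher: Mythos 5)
Your proposal follows essentially the same route as the paper's own proof: reduce via \eqref{lemma7-1} to a bound on $\|u^{(p)}-x_S\|_2$, bound that quantity by the expression inherited from the proof of Theorem~\ref{thm-cnt} (via Lemma~\ref{lemma1}), then run the two-case analysis on $\Delta(\lambda,\delta_q)$, arriving at the same quintic polynomial whose unique root in $(0,1)$ is $\beta^*$. Your bookkeeping (the equivalence $\rho_3<1\iff\Delta(\lambda,\delta_q)<\theta$, the union of the two $\lambda$-subintervals being \eqref{thm3-lam} and nonempty iff $\delta_q<\theta$, the squaring step, and the sign/monotonicity analysis of the quintic) matches the paper's almost line for line.

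One remark, since you half-flag it yourself in the parenthetical about the block partition: the thinnest step in both your proposal and the paper is the claim
$\|u^{(p)}-x_S\|_2\le\bigl(\delta_{3k}+\Delta(\lambda,\delta_q)(1+\sqrt{\lceil n/k\rceil}\,\delta_{3k})\bigr)\|x^{(p)}-x_S\|_2+\cdots$.
Lemma~\ref{lemma1}(i) only bounds restrictions $[\,\cdot\,]_{\Lambda}$ with $\Lambda$ of size $O(k)$, because its proof invokes Lemma~\ref{lemma-rip}(i) with $|\Lambda\cup\operatorname{supp}(v)|\le 3k$. In Theorem~\ref{thm-cnt} this suffices, since only $\|(u^{(p)}-x_S)_{S^{p+1}\cup S}\|_2$ is needed there; but \eqref{lemma7-1} requires the \emph{full} norm, and $u^{(p)}-x_S$ is dense. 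If, as your parenthetical suggests, you control the full norm by the $\lceil n/k\rceil$-block partition, then the term $\|(I-A^{\top}A)v\|_2$ costs $\sqrt{\lceil n/k\rceil}\,\delta_{3k}\|v\|_2$ (cf.\ \eqref{lemma1-2}), not $\delta_{3k}\|v\|_2$, so the constant you (and the paper) state does not actually follow from that argument; carried out honestly, the leading $\delta_{3k}$ in \eqref{thm3-rho} would become $\sqrt{\lceil n/k\rceil}\,\delta_{3k}$, with corresponding changes to \eqref{thm3-delta} and \eqref{thm3-lam}. The paper glosses over exactly the same point by ``recalling'' the restricted bound from Theorem~\ref{thm-cnt} as if it held for the full norm, so your proposal is faithful to the paper's proof, including at its weakest step; but be aware that, as written, neither argument closes this gap.
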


\begin{proof}
The iterate $x^{(p+1)}$ is generated by CNOT.
By \eqref{lemma7-1}, we have a bound relating $\|\mathcal{H}_k(w^{(p)} \otimes u^{(p)})-x_S\|_2$ and $\left\|u^{(p)}-x_S\right\|_2$, which is directly applicable in the analysis.
Recall the upper bound on $\|u^{(p)}-x_S\|_2$ from the proof of Theorem \ref{thm-cnt}, which is given by
\[
\|u^{(p)}-x_{S}\|_2 \leq \left(\delta_{3 k}+\Delta\left(\lambda, \delta_q\right)\left(1+\sqrt{\left\lceil\tfrac{n}{k}\right\rceil} \delta_{3 k}\right)\right) \|x^{(p)}-x_{S}\|_2+ \frac{\lambda \sqrt{\left\lceil\frac{n}{k}\right\rceil\left(1+\delta_k\right)}}{1-\delta_q} \|\eta^{\prime}\|_2,
\]
where $\Delta (\lambda, \delta_q)$ is given in (\ref{lemma1-10}).
Substituting this into \eqref{lemma7-1} leads to \eqref{thm3-convergence}.
We now find the range for $\lambda$ to ensure that $\rho_3<1$, where $\rho_3$ is given by \eqref{thm3-rho}.

\emph{Case 1.}   $\lambda \leq 1-\delta_q^2$. In this case, $\rho_3<1$ is written as 
\[
3 \varphi \sqrt{\frac{1+\delta_k}{1-\delta_{2 k}}} \left(\delta_{3 k}+ \left(1-\frac{\lambda}{1+\delta_q}\right)\left(1+\sqrt{\left\lceil\tfrac{n}{k}\right\rceil} \delta_{3 k}\right)\right) < 1,
\]
which is ensured if $\lambda$ satisfies that
\[
\left(1+\delta_q\right)\left(1-\frac{\frac{1}{3 \varphi} \sqrt{\frac{1-\delta_{2k}}{1+\delta_{k}}} -\delta_{3 k}}{1+\sqrt{\left\lceil\frac{n}{k}\right\rceil} \delta_{3 k}}\right) < \lambda \le 1-\delta_q^2.
\]
The existence of this interval is guaranteed by
\begin{equation} \label{thm3-3}
\delta_q < \left(\frac{1}{3 \varphi} \sqrt{\frac{1-\delta_{2k}}{1+\delta_{k}}} -\delta_{3 k}\right) / \left(1+\sqrt{\left\lceil\tfrac{n}{k}\right\rceil} \delta_{3 k}\right),
\end{equation}
which can be ensured by
\begin{equation} \label{thm3-2}
\delta_{\max\{q,3k\}} < \left(\frac{1}{3 \varphi} \sqrt{\frac{1-\delta_{\max\{q,3k\}}}{1+\delta_{\max\{q,3k\}}}}-\delta_{\max \{q, 3 k\}}\right) / \left(1+\sqrt{\left\lceil\tfrac{n}{k}\right\rceil} \delta_{\max \{q, 3 k\}}\right)
\end{equation}
by using $\delta_k \le \delta_{2k} \le \delta_{3k} \le \delta_{\max \{q, 3 k\}}$ and $\delta_q \le \delta_{\max \{q, 3 k\}}$.
For notation convenience, denote $\zeta=\delta_{\max \{q, 3 k\}}$.
We now point out that the inequality \eqref{thm3-2} is guaranteed under the condition \eqref{thm3-delta}. In fact, \eqref{thm3-2} is  equivalent to $h_3(\zeta) < 0$, where
\[
h_3(\zeta) := 9 \varphi^2\left[\left\lceil\tfrac{n}{k}\right\rceil \zeta^5+\left(\left\lceil\tfrac{n}{k}\right\rceil+4 \sqrt{\left\lceil\tfrac{n}{k}\right\rceil}\right) \zeta^4+4\left(\sqrt{\left\lceil\tfrac{n}{k}\right\rceil}+1\right) \zeta^3+4 \zeta^2\right]+\zeta-1.
\]
Since $h_3(0)=-1<0, h_3(1)>0$, and $h_3(\zeta)$ is monotonically increasing on $(0,1)$, there exists a unique root $\beta^* \in(0,1)$ satisfying $h_3(\beta^*) = 0$.
Therefore, \eqref{thm3-2} can be guaranteed by $\delta_{\max \{q, 3 k\}} < \beta^*$, which is exactly the condition \eqref{thm3-delta}.

\emph{Case 2.}  $\lambda > 1-\delta_q^2$.  In this case, the inequality $\rho_3<1$ is written as 
\[
3 \varphi \sqrt{\frac{1+\delta_k}{1-\delta_{2 k}}} \left(\delta_{3 k}+ \left(\frac{\lambda}{1-\delta_q}-1\right)\left(1+\sqrt{\left\lceil\tfrac{n}{k}\right\rceil} \delta_{3 k}\right)\right) < 1,
\]
which is guaranteed by choosing $\lambda$ such that
\[
1-\delta_q^2 < \lambda < \left(1-\delta_q\right)\left(1+\frac{\frac{1}{3 \varphi} \sqrt{\frac{1-\delta_{2k}}{1+\delta_{k}}} -\delta_{3 k}}{1+\sqrt{\left\lceil\tfrac{n}{k}\right\rceil} \delta_{3 k}}\right) .
\]
By a similar analysis to Case 1, 
the existence of such an interval is also guaranteed by \eqref{thm3-delta}.

Moreover, when $\eta' = 0$, \eqref{thm3-convergence} reduces to $\|x^{(p+1)}-x_S\|_2 \leq \rho_3 \|x^{(p)}-x_S\|_2$ with $\rho_3 < 1$, implying that $x^{(p+1)}$ converges to $x_S$.
Hence the theorem follows.
\end{proof}

\begin{theorem} \label{thm-cnotp}
Let $y:=Ax+\eta$, where $x$ is $k$-compressible and $S = {\cal L}_k (x)$, and $q \ge k$ be a positive integer.
Let $A \in \mathbb{R}^{m \times n}$ be a matrix satisfying
\begin{equation} \label{thm-cnotp-delta}
\delta_{\max \{q, 3 k\}} < \frac{-(1+6 \varphi)+\sqrt{(1+6 \varphi)^2+12 \varphi \sqrt{\left\lceil\tfrac{n}{k}\right\rceil}}}{6 \varphi \sqrt{\left\lceil\tfrac{n}{k}\right\rceil}}.
\end{equation}
If $\lambda$ is chosen such that
\begin{equation} \label{thm-cnotp-lam}
\left(1+\delta_q\right)\left(1-\frac{\frac{1}{3 \varphi} (1-\delta_{2 k})-\delta_{3 k}}{1+\sqrt{\left\lceil\tfrac{n}{k}\right\rceil} \delta_{3 k}}\right) < \lambda < \left(1-\delta_q\right)\left(1+\frac{\frac{1}{3 \varphi} (1-\delta_{2 k})-\delta_{3 k}}{1+\sqrt{\left\lceil\tfrac{n}{k}\right\rceil} \delta_{3 k}}\right),
\end{equation}
and parameters $\alpha, \gamma$ satisfy $\frac{1}{1+\delta_q} \le \alpha \gamma \le \frac{1}{1-\delta_q}$.
Then the sequence $\{x^{(p)}\}$ generated by the CNOTP satisfies
\begin{equation} \label{thm4-convergence}
\|x^{(p+1)}-x_{S}\|_2 \leq \rho_4 \|x^{(p)}-x_{S}\|_2 + \tau_4 \|\eta'\|_2,
\end{equation}
where $\eta' = Ax_S +\eta$,
\begin{equation} \label{thm4-rho}
\rho_4 := \frac{3 \varphi}{1-\delta_{2 k}}\left(\delta_{3 k}+ \Delta\left(\lambda, \delta_q\right) \left(1+\sqrt{\left\lceil\tfrac{n}{k}\right\rceil} \delta_{3 k}\right)\right),
\end{equation}
\[
\tau_4 := \frac{3 \lambda \varphi\left(1+\delta_k\right) \sqrt{\left\lceil\frac{n}{k}\right\rceil}}{(1-\delta_q) (1-\delta_{2k}) \sqrt{1+\delta_{2k}}} + \frac{2 \varphi}{(1-\delta_{2k}) \sqrt{1+\delta_{2k}}} + \frac{\sqrt{1+\delta_k}}{1-\delta_{2 k}}.
\]
Here, 
$\Delta (\lambda, \delta_q) $ is given by (\ref{lemma1-10}).
$\rho_4<1$ is guaranteed under \eqref{thm-cnotp-delta} and \eqref{thm-cnotp-lam}.
Moreover, the sequence $\{x^{(p)}\}$ converges to $x_S$ when $\eta' = 0$.
\end{theorem}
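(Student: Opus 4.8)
The plan is to treat CNOTP exactly as CNHTP was treated relative to CNHT in Theorem \ref{thm-cntp}, namely as a pursuit step appended to CNOT. First I observe that in Algorithm \ref{CNOTP} the iterate $x^{(p+1)}$ is the least-squares solution over $S^{(p+1)} = \operatorname{supp}({\cal H}_k(w^{(p)} \otimes u^{(p)}))$, so I apply Lemma \ref{lemma-pursuit} with $u = x_S$, $v = {\cal H}_k(w^{(p)} \otimes u^{(p)})$ and noise $\eta'$ to obtain
\[
\|x^{(p+1)} - x_S\|_2 \le \frac{1}{\sqrt{1-\delta_{2k}^2}}\,\|{\cal H}_k(w^{(p)} \otimes u^{(p)}) - x_S\|_2 + \frac{\sqrt{1+\delta_k}}{1-\delta_{2k}}\|\eta'\|_2.
\]
The vector ${\cal H}_k(w^{(p)} \otimes u^{(p)})$ is precisely the CNOT iterate, so Theorem \ref{thm-cnot} already supplies $\|{\cal H}_k(w^{(p)} \otimes u^{(p)}) - x_S\|_2 \le \rho_3\|x^{(p)} - x_S\|_2 + \tau_3\|\eta'\|_2$ with $\rho_3,\tau_3$ given in \eqref{thm3-rho}--\eqref{thm3-tau}. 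Substituting this into the pursuit inequality yields a recursion whose contraction coefficient is $\rho_3/\sqrt{1-\delta_{2k}^2}$ and whose noise coefficient is $\tau_3/\sqrt{1-\delta_{2k}^2} + \sqrt{1+\delta_k}/(1-\delta_{2k})$.

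The next step is to collapse these coefficients to the stated forms. Writing $B := \delta_{3k} + \Delta(\lambda,\delta_q)(1 + \sqrt{\lceil n/k\rceil}\,\delta_{3k})$ so that $\rho_3 = 3\varphi\sqrt{(1+\delta_k)/(1-\delta_{2k})}\,B$, and using $\sqrt{1-\delta_{2k}^2} = \sqrt{(1-\delta_{2k})(1+\delta_{2k})}$, the exact contraction factor equals $3\varphi\sqrt{1+\delta_k}\,B/((1-\delta_{2k})\sqrt{1+\delta_{2k}})$; relaxing $\sqrt{1+\delta_k} \le \sqrt{1+\delta_{2k}}$ (valid since $\delta_k \le \delta_{2k}$) produces the stated $\rho_4 = \frac{3\varphi}{1-\delta_{2k}}B$ as an upper bound. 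The three terms of $\tau_4$ arise by the same square-root bookkeeping: dividing the two terms of $\tau_3$ by $\sqrt{(1-\delta_{2k})(1+\delta_{2k})}$ gives the first two terms of $\tau_4$, and the direct pursuit error $\sqrt{1+\delta_k}/(1-\delta_{2k})$ contributes the third.

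Finally I verify $\rho_4 < 1$ exactly as in Theorems \ref{thm-cnt}--\ref{thm-cnot}, splitting on the two branches of $\Delta(\lambda,\delta_q)$ in \eqref{lemma1-10}. In the branch $\lambda \le 1 - \delta_q^2$ one has $\Delta(\lambda,\delta_q) = 1 - \frac{\lambda}{1+\delta_q}$, and $\rho_4 < 1$ holds on the nonempty $\lambda$-interval in \eqref{thm-cnotp-lam} provided $\delta_q < (\frac{1}{3\varphi}(1-\delta_{2k}) - \delta_{3k})/(1 + \sqrt{\lceil n/k\rceil}\,\delta_{3k})$; using $\delta_{2k},\delta_{3k},\delta_q \le \delta_{\max\{q,3k\}} =: \zeta$ this reduces to $h_4(\zeta) := 3\varphi\sqrt{\lceil n/k\rceil}\,\zeta^2 + (6\varphi+1)\zeta - 1 < 0$. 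Since $h_4(0) = -1 < 0$, $h_4(1) > 0$ and $h_4$ is strictly increasing on $(0,1)$, its unique positive root is exactly the right-hand side of \eqref{thm-cnotp-delta}, which is therefore the condition guaranteeing the interval. The branch $\lambda > 1 - \delta_q^2$, where $\Delta(\lambda,\delta_q) = \frac{\lambda}{1-\delta_q} - 1$, is handled identically. The main obstacle is purely the square-root bookkeeping in the coefficient simplification, that is, confirming that the composed factor collapses cleanly to $\rho_4$ and $\tau_4$ and that the feasibility quadratic is precisely $h_4$; once these are matched, the case analysis and the noiseless conclusion $\|x^{(p+1)} - x_S\|_2 \le \rho_4\|x^{(p)} - x_S\|_2$ follow as before.
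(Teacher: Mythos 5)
Your proposal is correct and follows essentially the same route as the paper's own proof: apply Lemma \ref{lemma-pursuit} with $v = {\cal H}_k(w^{(p)} \otimes u^{(p)})$, invoke the one-step CNOT bound \eqref{thm3-convergence}, combine, and then verify $\rho_4 < 1$ by the two-branch analysis of $\Delta(\lambda,\delta_q)$ reducing to the quadratic $h_4$. Your treatment is in fact slightly more explicit than the paper's, since you spell out that the relaxation $\sqrt{1+\delta_k} \le \sqrt{1+\delta_{2k}}$ is what turns the exact factor $\rho_3/\sqrt{1-\delta_{2k}^2}$ into the stated upper bound $\rho_4$, a point the paper compresses into the phrase ``using $\delta_k \le \delta_{2k}$.''
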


\begin{proof}
The CNOTP algorithm integrates the CNOT and  a projection step.
Applying Lemma \ref{lemma-pursuit} to the projection step yields
\begin{equation} \label{thm4-1}
\|x^{(p+1)} - x_{S}\|_2 
\le \frac{1}{\sqrt{1-\delta_{2 k}^2}} \|{\cal H}_k (w^{(p)} \otimes u^{(p)}) - x_{S}\|_2 + \frac{\sqrt{1+\delta_k}}{1-\delta_{2 k}}\|\eta'\|_2.
\end{equation}
From the analysis of CNOT, the following bound holds (see \eqref{thm3-convergence}):
\begin{align}
\|\mathcal{H}_k (w^{(p)} \otimes u^{(p)})-x_{S}\|_2 
\leq \rho_3 \|x^{(p)}-x_{S}\|_2 + \tau_3 \|\eta'\|_2, \label{thm4-2}
\end{align}
where $\rho_3$ and $\tau_3$ are given in \eqref{thm3-rho} and \eqref{thm3-tau}, respectively.
Combining \eqref{thm4-1} and \eqref{thm4-2} and using $\delta_k \le \delta_{2k}$ lead to
\[
\begin{aligned}
\|x^{(p+1)} - x_{S}\|_2
\le \frac{\rho_3}{\sqrt{1-\delta_{2 k}^2}} \|x^{(p)}-x_S\|_2 + \left(\frac{\tau_3}{\sqrt{1-\delta_{2 k}^2}}+\frac{\sqrt{1+\delta_{k}}}{1-\delta_{2 k}}\right) \|\eta'\|_2,
\end{aligned}
\]
which gives the desired result \eqref{thm4-convergence} with $\rho_4=\frac{\rho_3}{\sqrt{1-\delta_{2 k}^2}}$, as in \eqref{thm4-rho}.
We now derive the condition for $\rho_4<1$.

\emph{Case 1.} $\lambda \leq 1-\delta_q^2$.  We see that $\rho_4<1$ is written as
\[
\frac{3 \varphi}{1-\delta_{2 k}}\left(\delta_{3 k}+\left(1-\frac{\lambda}{1+\delta_q}\right)\left(1+\sqrt{\left\lceil\tfrac{n}{k}\right\rceil} \delta_{3 k}\right)\right)<1.
\]
This can be ensured if  $\lambda$ is chosen such that 
\[
\left(1+\delta_q\right)\left(1-\frac{\frac{1}{3 \varphi} (1-\delta_{2 k})-\delta_{3 k}}{1+\sqrt{\left\lceil\tfrac{n}{k}\right\rceil} \delta_{3 k}}\right) < \lambda \le 1-\delta_q^2,
\]
such a range for $ \lambda$ exists if 
\begin{equation} \label{thm4-3}
\delta_q < \frac{\frac{1}{3 \varphi}\left(1-\delta_{2 k}\right)-\delta_{3 k}}{1+\sqrt{\left\lceil\frac{n}{k}\right\rceil} \delta_{3 k}},
\end{equation}
We now point out that (\ref{thm4-3}) holds under the assumption (\ref{thm-cnotp-delta}).
Since $\delta_{2k} \le \delta_{3k} \le \delta_{\max \{q, 3 k\}}$, and $\delta_q \le \delta_{\max \{q, 3 k\}}$, to ensure  (\ref{thm4-3}) it is sufficient to guaranteed that 
\[
\delta_{\max \{q, 3 k\}}<\frac{\frac{1}{3 \varphi}\left(1-\delta_{\max \{q, 3 k\}}\right)-\delta_{\max \{q, 3 k\}}}{1+\sqrt{\left\lceil\frac{n}{k}\right\rceil} \delta_{\max \{q, 3 k\}}},
\]
which is equivalent to $h_4(\delta_{\max \{q, 3 k\}}) < 0$, where
\[
h_4(\delta_{\max \{q, 3 k\}}) := 3 \varphi \sqrt{\left\lceil\tfrac{n}{k}\right\rceil} \delta_{\max \{q, 3 k\}}^2+(6 \varphi+1) \delta_{\max \{q, 3 k\}}-1.
\]
The function $h_4$ is monotonically increasing over $(0,1)$ with $h_4(0) = -1 < 0$ and $h_4(1) > 0$.
Hence, there exists a unique real root of $h_4=0$ in the interval $(0,1)$. This unique positive root is exactly the right-hand side of (\ref{thm-cnotp-delta}). Thus the inequality above holds precisely under the condition \eqref{thm-cnotp-delta}.

\emph{Case 2.}  $\lambda > 1-\delta_q^2$. In this case, the inequality $\rho_4 < 1$ becomes
\[
\frac{3 \varphi}{1-\delta_{2 k}}\left(\delta_{3 k}+\left(\frac{\lambda}{1-\delta_q}-1\right)\left(1+\sqrt{\left\lceil\tfrac{n}{k}\right\rceil} \delta_{3 k}\right)\right)<1
\]
which is guaranteed if $\lambda$ satisfies 
\[
1-\delta_q^2<\lambda<\left(1-\delta_q\right)\left(1+\frac{\frac{1}{3 \varphi} (1-\delta_{2 k})-\delta_{3 k}}{1+\sqrt{\left\lceil\tfrac{n}{k}\right\rceil} \delta_{3 k}}\right).
\]
By a similar proof to Case 1, it is easy to verify that the above range for $\lambda$ exists  under the condition \eqref{thm-cnotp-delta}. Moreover, when $\eta' = 0$, \eqref{thm4-convergence} reduces to $\|x^{(p+1)}-x_S\|_2 \leq \rho_4 \|x^{(p)}-x_S\|_2$ with $\rho_4 < 1$, implying that $x^{(p+1)}$ converges to $x_S$.
This completes the proof.
\end{proof}

\subsection{Computational complexity}
The proposed algorithms consist of three primary parts: i) computing the compressed Newton direction, ii) performing thresholding operations, and iii) projecting onto a $k$-dimensional subspace.
Computation of the gradient $\nabla f(x) = A^{\top}(A x - y)$ requires a matrix–vector multiplication with complexity $O(mn)$, while applying the hard thresholding operator ${\cal H}_q$ requires $O(n \log k)$ flops.
The compressed Newton search direction can be obtained by solving
\[
A_{\Omega^{p}}^{\top} A_{\Omega^{p}} (\mathbf{d}_{\mathrm{CN}})_{\Omega^p}= A_{\Omega^{p}}^{\top} (y-A x^{(p)}),
\]
which requires $O\left(q^3\right)$ using Cholesky decomposition of $A_{\Omega^{p}}^{\top} A_{\Omega^{p}}$ along with forward and backward substitutions.
Hence, forming the compressed Newton direction has a total complexity of $O(mn + q^3)$.
For CNHT, the overall per-iteration complexity is $O(mn + q^3)$, as the second application of ${\cal H}_k$ to a vector has a negligible cost of $O(n\log k)$.
For CNOT, solving the relaxed optimal thresholding via an interior-point method requires $O(n^{3.5} L)$ operations \cite{zhao2021analysis}, resulting in a per-iteration complexity of $O(mn + n^{3.5} L)$ since $q \ll n$ in general.
The pursuit step in CNHTP and CNOTP is an orthogonal projection onto a $k$-dimensional subspace, costing $O(k^3)$.
Thus, the per-iteration complexities of CNHTP and CNOTP are $O(mn+q^3+k^3)$ and $O(mn + n^{3.5} L)$, respectively.

Table \ref{complexity} summarizes the computational complexities of the proposed algorithms and other popular thresholding-based methods.

\begin{table}[h]
\centering
\begin{tabular}{|l|l|c|}
\hline
Algorithms       & Computational Complexity              & References                        \\ \hline
IHT             & $O\left(m n\right)$                   &         \cite{blumensath2009iterative}                          \\ \hline
HTP, CoSaMP, SP & $O\left(m n+m^3\right)$               &       \cite{dai2009subspace, needell2009cosamp, foucart2011hard, zhao2021analysis}                            \\ \hline
NSIHT, NSHTP           & $O\left(m n^2+n^3\right)$         & \cite{meng2020newton}             \\ \hline
ROT$_w$         &   $O\left(m^3+m n+n^{3.5} L\right)$ & \cite{zhao2021analysis}  \\ \hline
PGROTP          & $O\left(m n+(q+k)^{3.5} L+k^3\right)$ & \cite{meng2022partial}  \\ \hline
CNHTP          &   $O\left(m n+q^3+k^3\right)$  &  /   \\ \hline
CNOTP          &   $O(mn+n^{3.5} L)$  &   /   \\ \hline
\end{tabular}
\caption{Comparison of the per-iteration computational complexity of a few algorithms.} \label{complexity}
\end{table}

\begin{remark} \label{remark-q}
The integer $q$ is crucial in computing the search direction.
It should be large enough to capture sufficient gradient information, while still ensuring that a valid Newton direction can be computed within the chosen subspace.
If $A$ satisfies $\delta_k<1$, then choosing $q \geq k$ ensures that the Hessian restricted to any $q$-dimensional subspace is invertible, allowing the computation of a Newton direction.
However, setting $q$ too large increases computational cost and may lead to a subspace in which the standard Newton direction is undefined.
Therefore, $k \leq q \leq 2k$ is a reasonable range for $q$, which aligns with the use of $\delta_{2k}$ in the theoretical analysis of the algorithms.
\end{remark}

\section{Numerical Investigations} \label{numerical}
In this section, we use sparse signal recovery as specific examples of sparse optimization problems.
We conduct several types of numerical experiments on random sparse signal recovery.
The positions of the nonzero entries of the synthetic sparse vector $x$ follow a uniform distribution, and their values follow the standard normal distribution $\mathcal{N}(0,1)$.
All convex problems are solved using CVX \cite{grant2020cvx} with the MOSEK solver.
We evaluate the success frequency of the proposed algorithms in solving random sparse optimization problems with different choices of parameters, and we also evaluate the phase transition properties of our algorithms.

\subsection{Sensitivity to parameter choice}

\begin{figure}
\centering
\subfigure[CNHTP]{
\includegraphics[width=0.48\textwidth]{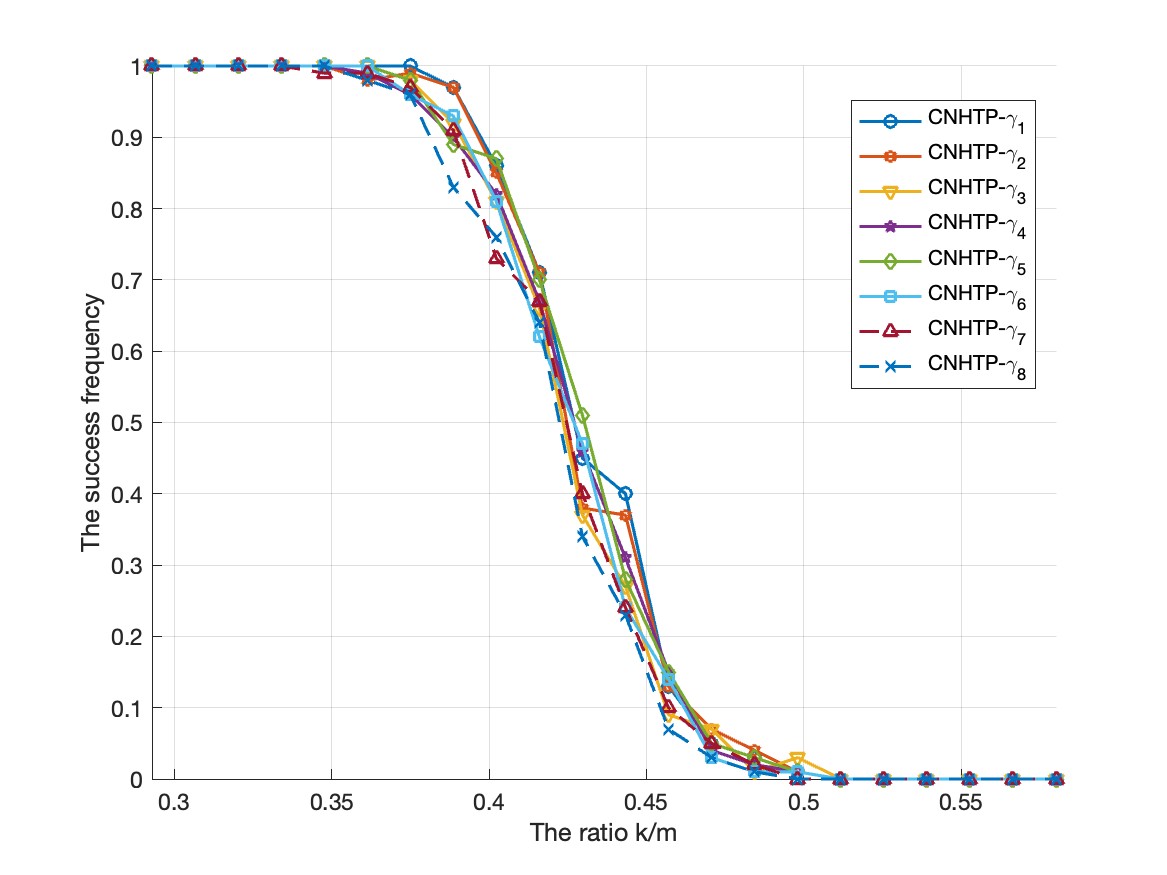}}
\subfigure[CNOTP]{
\includegraphics[width=0.48\textwidth]{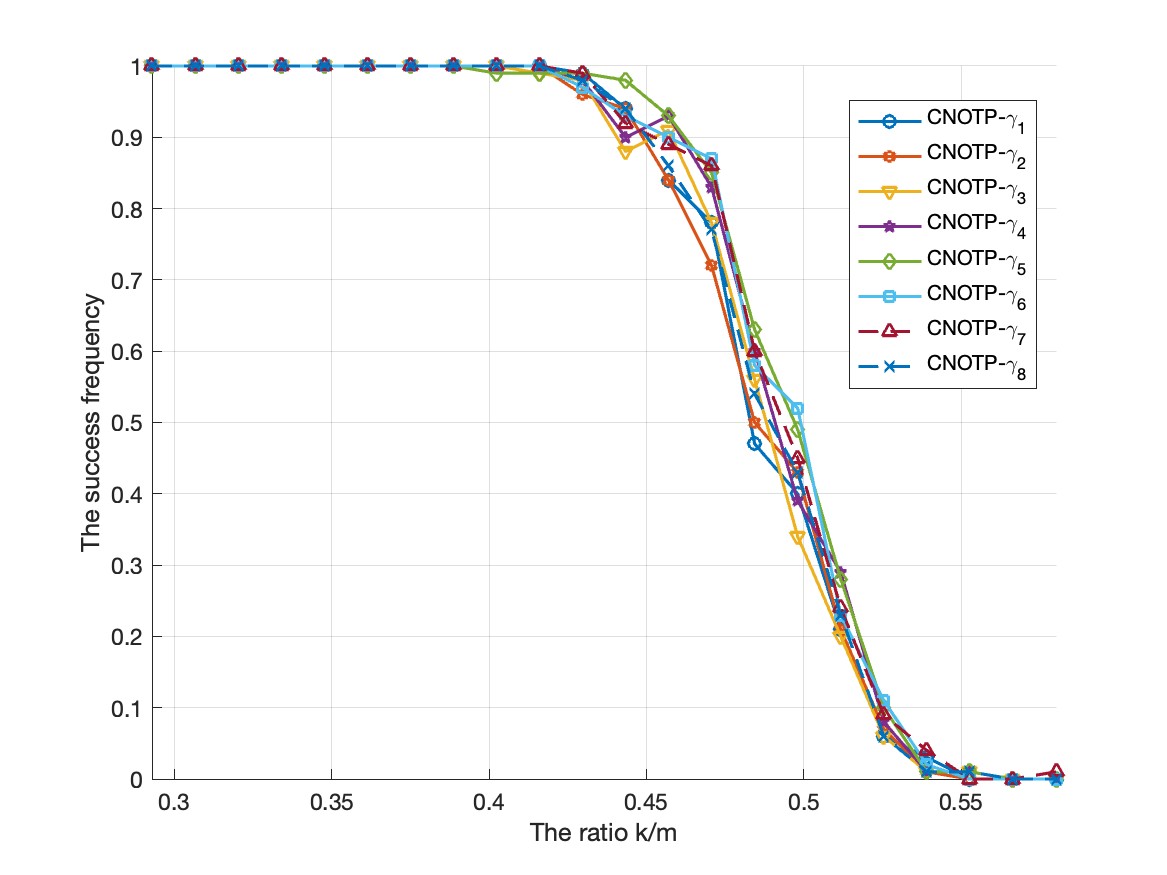}}
\caption{Performance comparison of CNHTP and CNOTP with $\lambda=1$, $q=k$, $\alpha=1$, and different values of $\gamma$ ($\gamma_1=0$, $\gamma_2=0.001$, $\gamma_3=0.01$, $\gamma_4=0.1$, $\gamma_5=0.3$, $\gamma_6=0.5$, $\gamma_7=0.7$, $\gamma_8=1$).}
\label{fig-parameter-gamma}
\end{figure}

We first investigate the success frequency for signal recovery via CNHTP and CNOTP algorithms under different selections of the parameters $(q, \lambda, \alpha, \gamma)$.
The matrix $A \in \mathbb{R}^{512 \times 1024}$ is generated with i.i.d. Gaussian entries from $\mathcal{N}(0,1 / m)$ with $m$=512, which satisfies the RIP with a high probability \cite{candes2005decoding}.
Observations are given by $y=A x+10^{-5} \eta$, where $\eta \sim \mathcal{N}(0, I)$ is standard Gaussian noise.
The algorithm terminates when the criterion
\begin{equation} \label{stop-criteria}
\frac{\left\|x^{(p)}-x\right\|_2}{\left\|x\right\|_2} \le 10^{-3}
\end{equation}
is satisfied or when the maximum of 30 iterations has been executed.
The algorithm is said to be successful if the found solution $x^{(p)}$ satisfies \eqref{stop-criteria}.
We vary the sparsity level $k$ from 150 to 300 in steps of 7, yielding 22 test points.
For each sparsity level $k$, 100 random problem instances are generated to compute the success frequency.

Three sets of experiments are conducted to evaluate the performance of the algorithm. 
The first set examines the effect of varying $\gamma$ while fixing $\lambda=1$, $\alpha=1$, and $q=k$.
Specifically, $\gamma$ is chosen respectively as $\gamma_1=0$, $\gamma_2=0.001$, $\gamma_3=0.01$, $\gamma_4=0.1$, $\gamma_5=0.3$, $\gamma_6=0.5$, $\gamma_7=0.7$, and $\gamma_8=1$.
The results for CNHTP and CNOTP are presented in Fig. \ref{fig-parameter-gamma}, which indicates that both algorithms are not particularly sensitive to the choice of $\gamma$.
Thus the subsequent experiments simply adopt $\gamma=0.1$ unless otherwise specified.

The second experiment investigates the sensitivity of the algorithm according to different choice of $q$ but with fixed $\lambda=1$, $\gamma=0.01$, $\alpha=1$, specifically, $q_1=k, q_2=\lceil 1.25 k\rceil, q_3=\lceil 1.5 k\rceil$, and $q_4=2 k$, where $k$ denotes the sparsity level of $x$.
The results are presented in Fig. \ref{fig-parameter1}.
It shows that both CNHTP and CNOTP have a slight higher success frequency when $q=k$.
This choice of $q$ is consistent with the analysis in Remark \ref{remark-q}, which suggests selecting a support size close to the true sparsity level to preserve sufficient information in the search direction.

\begin{figure}
\centering
\subfigure[CNHTP]{
\includegraphics[width=0.48\textwidth]{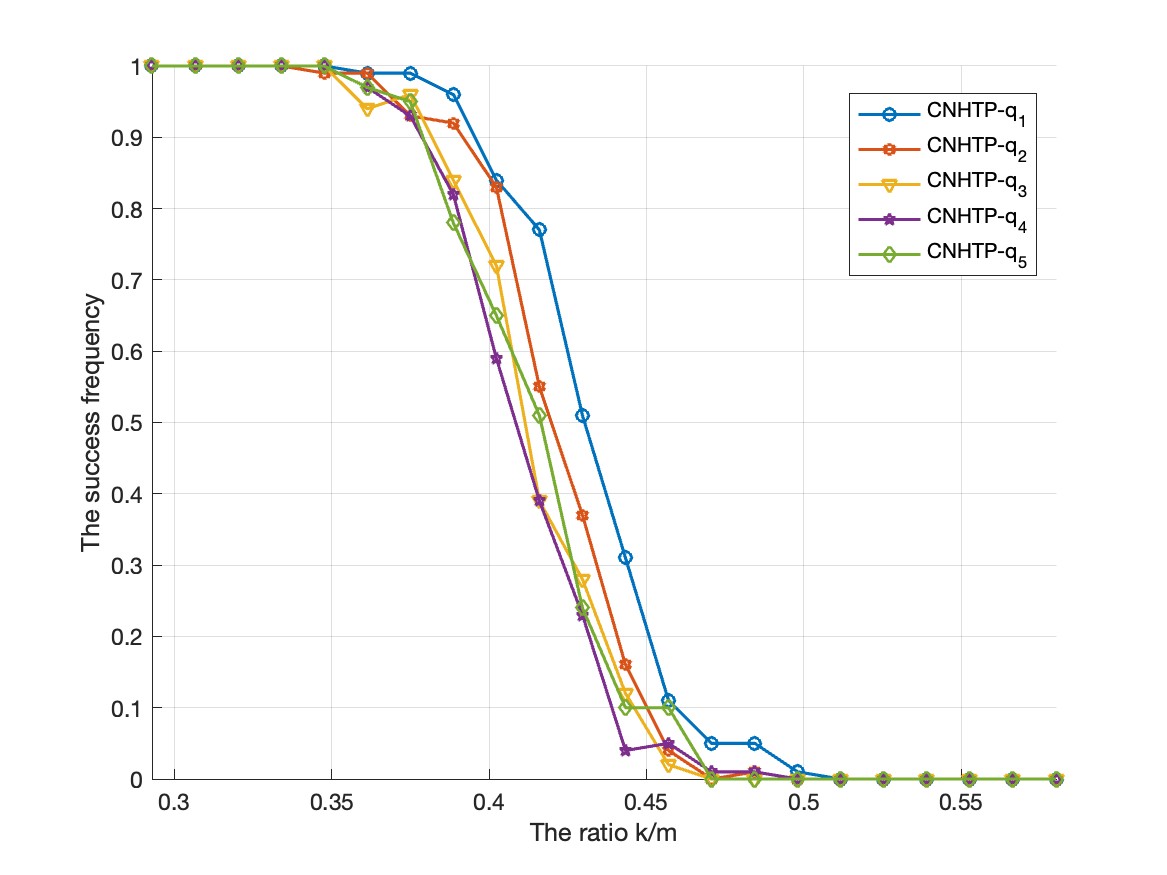}}
\subfigure[CNOTP]{
\includegraphics[width=0.48\textwidth]{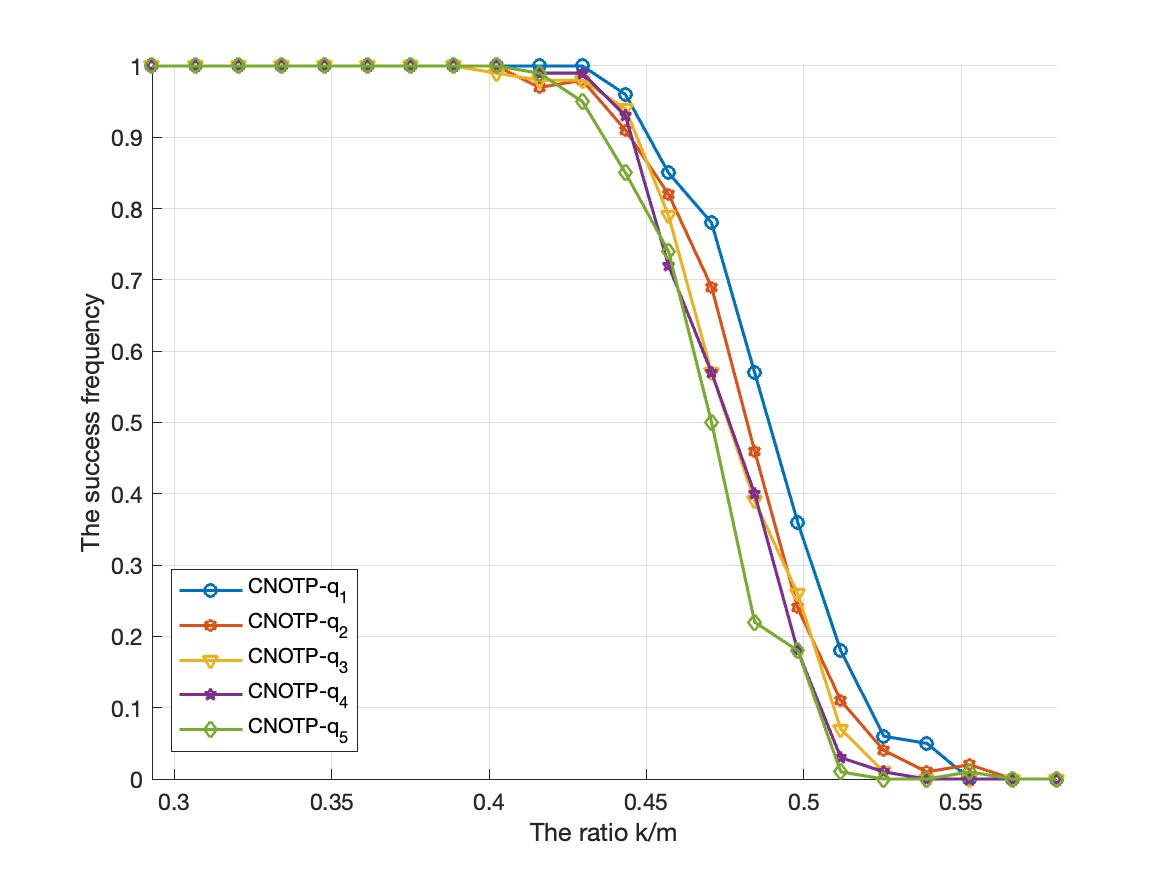}}
\caption{Performance comparison of CNHTP and CNOTP with $\lambda=1$, $\alpha=1$, $\gamma=0.01$ and different values of $q$ ($q_1=k, q_2=\lceil 1.25 k\rceil, q_3=\lceil 1.5 k\rceil$, $q_4= \lceil 1.75 k\rceil$, and $q_5 =2k$).}
\label{fig-parameter1}
\end{figure}

The third one evaluates the sensitivity of CNHTP and CNOTP to the change of stepsize $\lambda$, which varies as $\lambda=1,2, \ldots, 7$ while $q=k$, $\gamma=0.01$ and $\alpha=1$.
As shown in Fig. \ref{fig-parameter2}, CNOTP exhibits relative insensitivity to changes in $\lambda$, demonstrating its robustness across a wide range of stepsizes compared to CNHTP.
This figure also shows that larger values of $\lambda$, such as $\lambda=4$, may lead to improved recovery performance, particularly for CNHTP, although it does not fall within the theoretical range for $\lambda$ in our main results established in Section \ref{theo}.
This indicates that the condition in our main result might have room to improve, and we can leave this as the future work.
Motivated by this observation, we adopt $q=k$ and $\lambda=4$ in the subsequent experiments.

\begin{figure}
\centering
\subfigure[CNHTP]{
\includegraphics[width=0.48\textwidth]{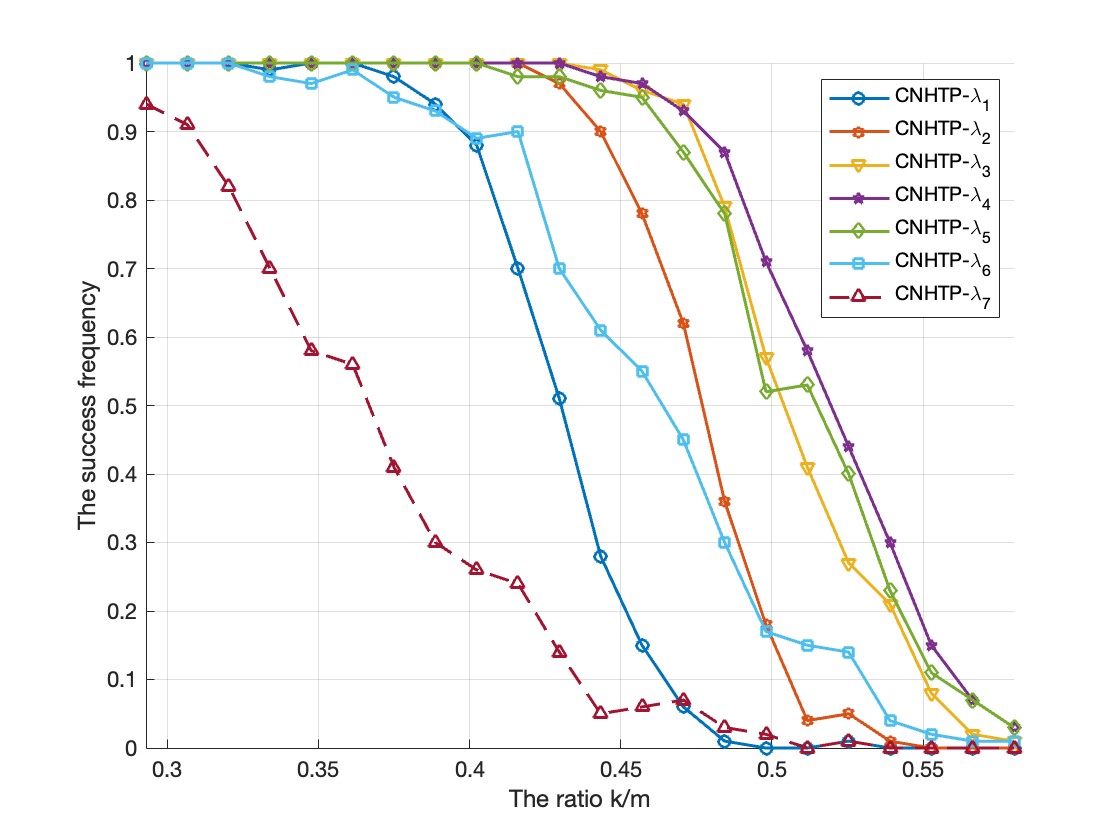}}
\subfigure[CNOTP]{
\includegraphics[width=0.48\textwidth]{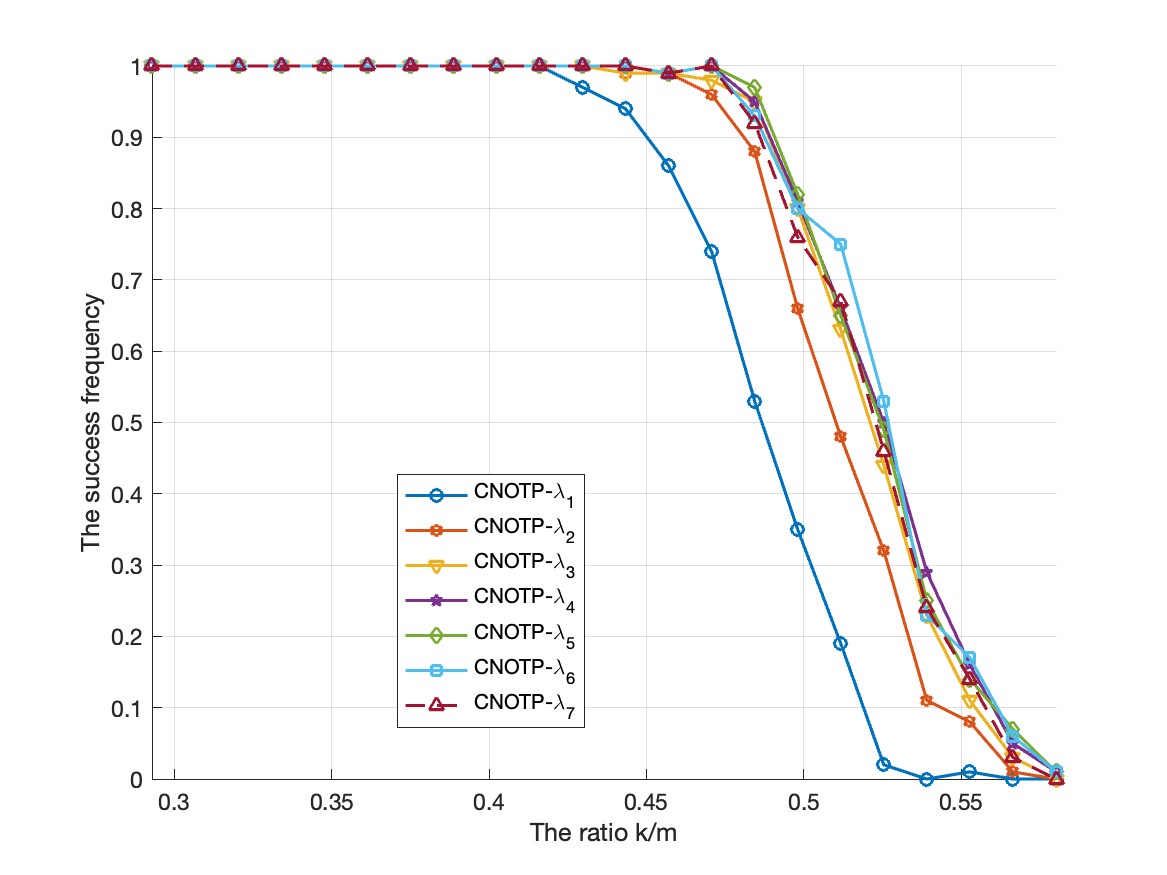}}
\caption{Performance comparison of CNHTP and CNOTP with parameters $q=k$, $\alpha=1$, $\gamma=0.01$ and different values of $\lambda$ ($\lambda=1,2,\dots, 7$).}
\label{fig-parameter2}
\end{figure}

\subsection{Phase transition curves} \label{phase-transition}
In the second experiment, we compare CNHTP and CNOTP with several established sparse optimization algorithms across varying sampling ratios for the matrix and sparsity levels for the target vector.
The vector dimension is set to be $n=1024$.
The horizontal axis in Fig. \ref{figure-phase} corresponds to the sampling ratio $\delta=m / n$, where $m$ is the number of rows of matrix $A$.
We sample a total of 20 values of $\delta$: 10 uniformly in $[0.1,0.3]$ and 10 in $(0.3,0.6]$.
Lower values of $\delta$ (towards the left) correspond to fewer observations.
The vertical axis represents the sparsity ratio $\rho=k / m$, uniformly discretized into 40 values, with smaller $\rho$ indicating sparser vectors.
Two types of random matrices are used: Gaussian matrices with entries i.i.d. drawn from $\mathcal{N}(0,1 / m)$, and Bernoulli matrices with entries taking values $\pm 1 / \sqrt{m}$ with equal probability, i.e., $\mathbb{P}\left(A_{i j}= \pm 1 / \sqrt{m}\right)=1 / 2$.
Observations are generated as $y=A x+10^{-4} \eta$, where $\eta \sim \mathcal{N}(0, I)$ is a Gaussian noise vector.
CNHTP and CNOTP are evaluated using $(q, \lambda, \alpha, \gamma)=(k, 4,1,0.01)$ and compared with four algorithms: PGROTP \cite{meng2022partial} with $(q, \lambda)=(k, 2)$, NTROTP \cite{meng2022newton} with $(\lambda, \epsilon)=\left(1,10^{-3}\right)$, ECDOMP with $\gamma=0.7$ \cite{zhao2023dynamic}, and SP \cite{dai2009subspace}, where $k$ denotes the vector sparsity level.
Each algorithm is executed up to 30 iterations or until the stopping criterion \eqref{stop-criteria} is met.
For each ($\delta, \rho$) pair, we generate 10 random problem instances and record the number of successful recoveries.
Then, for each $\delta$, a logistic regression model is fitted to the success frequencies as a function of $\rho$.

Fig. \ref{figure-phase} demonstrates, for each $\delta$, the estimated value of $\rho$ at which the success frequency reaches 0.9.
This phase transition figure shows that a higher recovery curve indicates stronger performance, as it indicates that the algorithm is able to solve a broader range of problem instances.
The overall performance of CNHTP and CNOTP is comparable to those existing methods tested in this experiment.

\begin{figure}
\centering
\subfigure[Gaussian matrices]{
\includegraphics[width=0.48\textwidth]{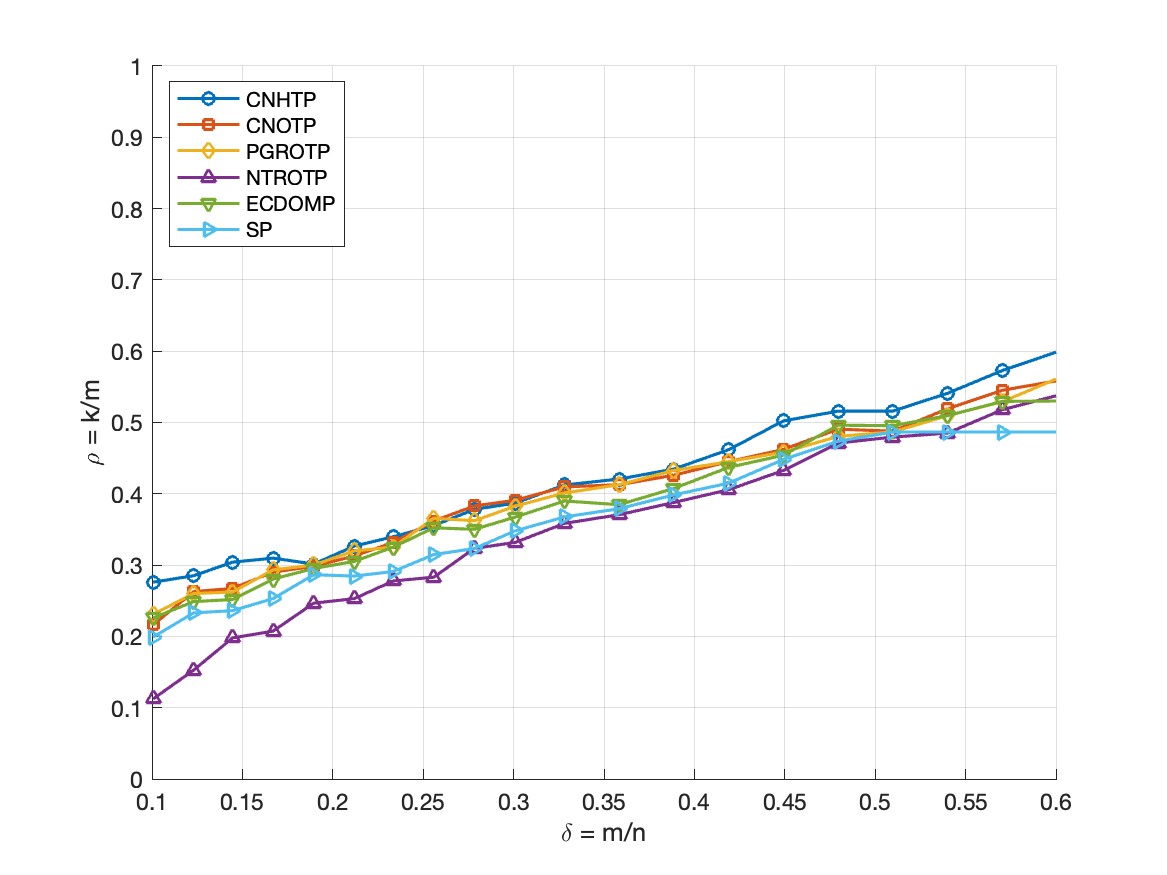}}
\subfigure[Bernoulli matrices]{
\includegraphics[width=0.48\textwidth]{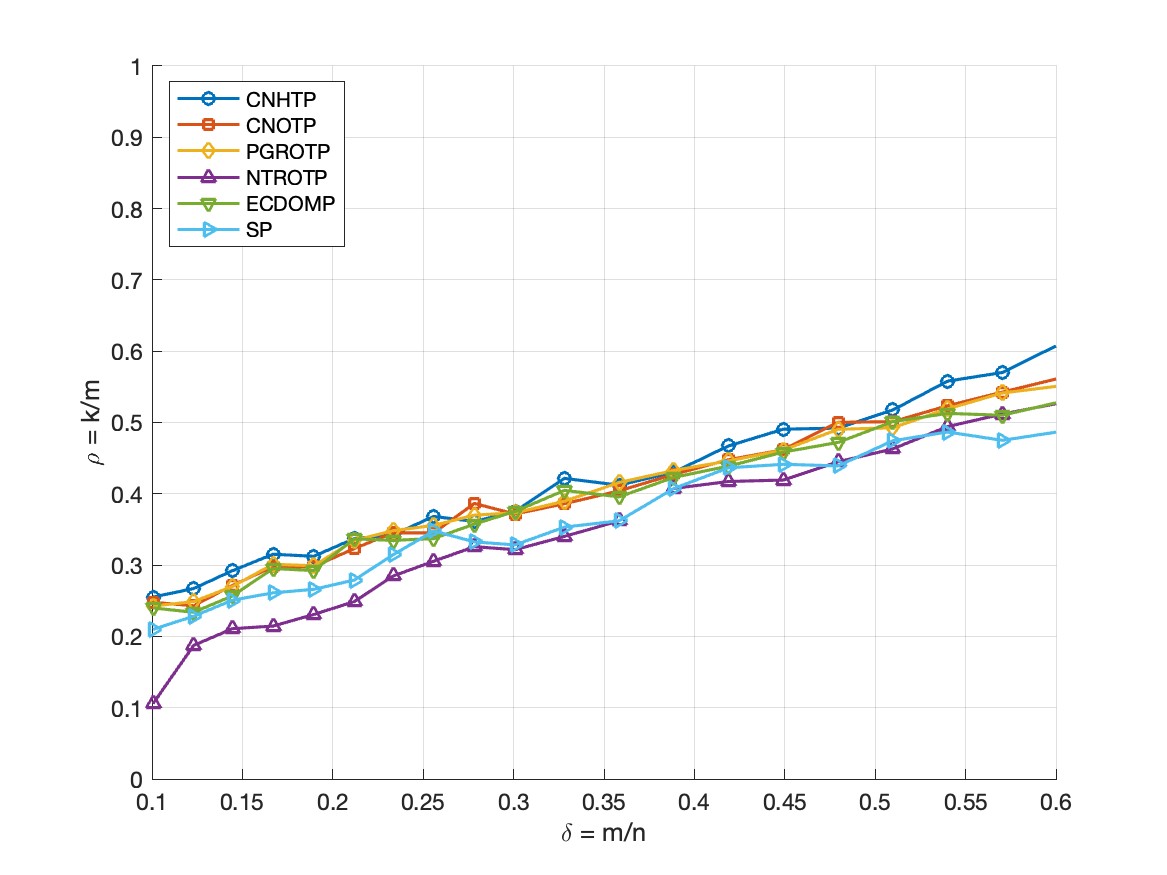}}
\caption{Phase transition curves for sparse signal recovery with success frequency of 0.9.} \label{figure-phase}
\end{figure}

It should be pointed out that this work only analyzes the theoretical performance of the proposed algorithms when the parameter $\gamma$ is taken to be a positive number. The convergence of our algorithms in the extreme case $\gamma = 0$ has not been established, since the key Lemma 3.3 established in this work only applies to the case $\gamma > 0$.
However, it is worth noting that CNOTP with $\gamma = 0$ has relatively low computational cost compared to the case $\gamma \neq 0$, as it significantly reduces the cost of solving the first subproblem in Step 3 of Algorithm \ref{CNOTP}. This advantage can be seen from Fig. \ref{figure-timeratio}, which illustrates the ratio of the average computation time of NTROTP to that of CNOTP for successful signal recovery by using Gaussian and Bernoulli matrices, respectively.
It can be seen from Fig. \ref{figure-timeratio} that CNOTP requires less computational time than NTROTP, and that the choice between Gaussian or Bernoulli matrices does not make any noticeable difference to their performance.

\begin{figure}
\centering
\subfigure[Gaussian matrices]{
\includegraphics[width=0.48\textwidth]{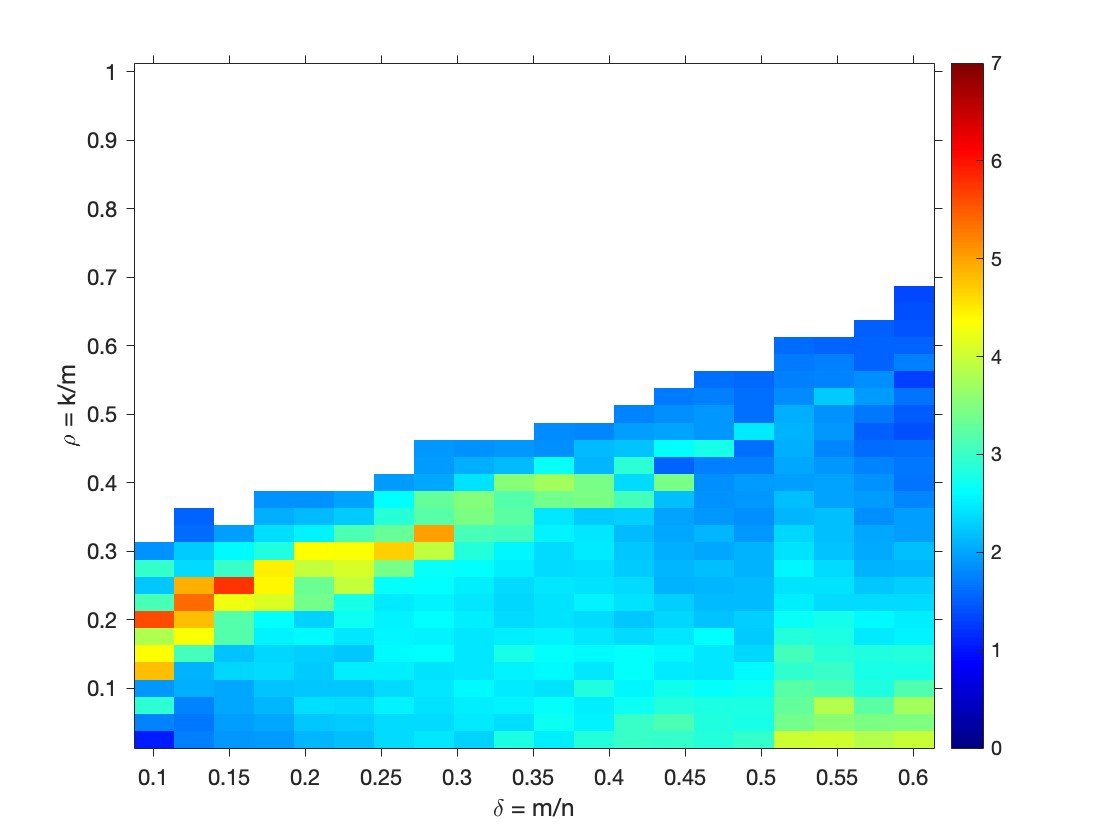}}
\subfigure[Bernoulli matrices]{
\includegraphics[width=0.48\textwidth]{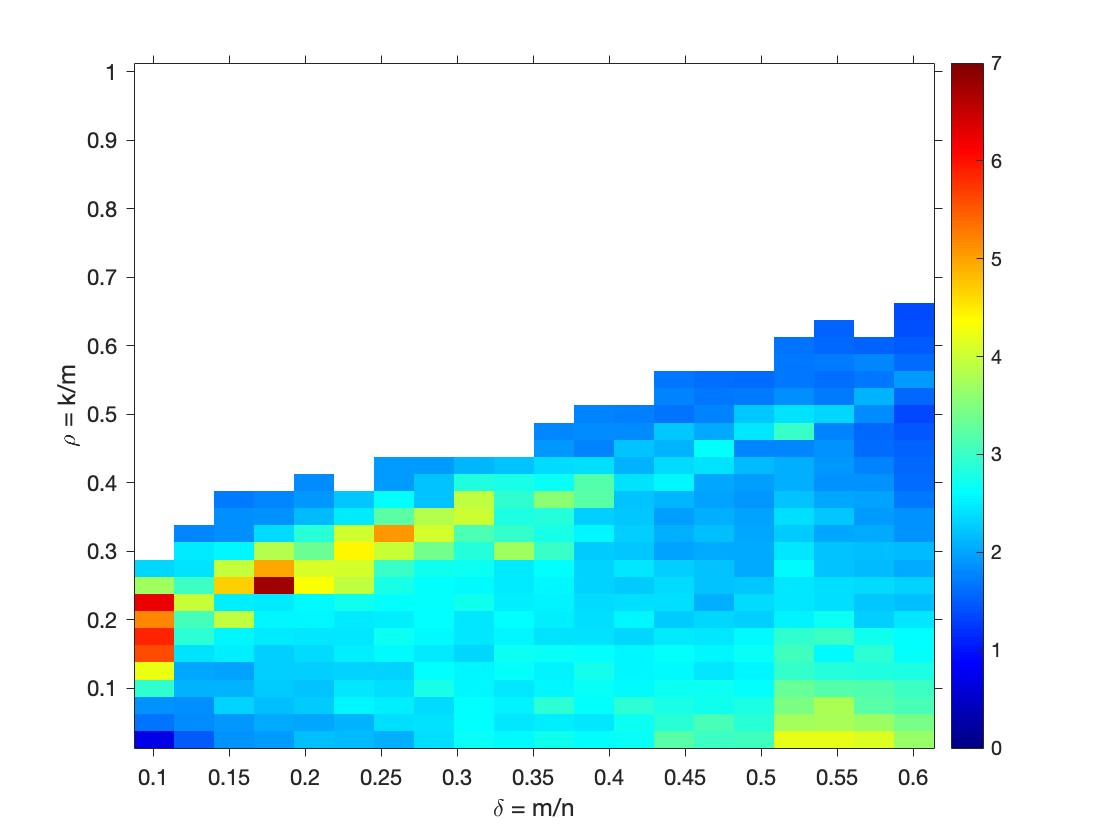}}
\caption{Time consumption ratio NTROTP/CNOTP for successful recovery.} \label{figure-timeratio}
\end{figure}

\section{Summary}
In this work, we have introduced a class of compressed Newton-direction thresholding algorithms for solving sparse optimization problems.
Under the restricted isometry property, we have shown that with suitably chosen parameters the proposed algorithms can be globally converge to the sparse solution of the problem.
Empirical results demonstrate that our methods perform comparably to several existing algorithms in solving the sparse optimization problem typically arising from signal recovery.

\bmhead{Data availability statement}

The data that support the findings of this study are available on request from the authors.

\end{document}